\theoremstyle{plain}
\newtheorem{theorem}{Theorem}[section]
\newtheorem{proposition}[theorem]{Proposition}%[section]
\newtheorem{lemma}[theorem]{Lemma}%[section]
\newtheorem{hypothesis}[theorem]{Hypothesis}%[section]
\theoremstyle{definition}
\newtheorem{definition}[theorem]{Definition}%[section]
\newtheorem{example}[theorem]{Example}%[section]
\newtheorem{note}[theorem]{Remark}%[section]
\newcommand{\defn}[1]{Definition~\ref{defn:#1}}
\newcommand{\tab}[1]{Table~\ref{tab:#1}}
\newcommand{\eq}[1]{(\ref{eq:#1})}
\newcommand{\ex}[1]{Example~\ref{ex:#1}}
\newcommand{\secn}[1]{Section~\ref{sec:#1}}
\newcommand{\rem}[1]{Remark~\ref{rem:#1}}
\newcommand{\lem}[1]{Lemma~\ref{lem:#1}}
\newcommand{\thm}[1]{Theorem~\ref{thm:#1}}
\font\tensym=msbm10
\font\sevensym=msbm7
\font\fivesym=msbm5
\def\sym{\fam\symfam\tensym}
\newcommand{\sB}{\ensuremath{\mbox{\sym B}}}
\newcommand{\bB}{\ensuremath{\mathbf{B}}}
\newcommand{\cB}{\mbox{\ensuremath{\cal B}}}
\newcommand{\cP}{\mbox{\ensuremath{\cal P}}}
\newcommand{\cR}{\mbox{\ensuremath{\cal R}}}
\newcommand{\mdash}[1][]{---#1}
\newcommand{\ndash}[1][]{--#1}
\newcommand{\etc}{etc.\ }
\newcommand{\ie}{i.e.\ }
\newcommand{\cf}{cf.\ }
\newcommand{\ndf}{\ensuremath{NDF'}}
\newcommand{\id}{\ensuremath{\mathbf{Id}}}
\newcommand{\goesto}[1][]{\stackrel{#1}{\rightarrow}^{}}
\newcommand{\longgoesto}[1][]{\stackrel{#1}{\longrightarrow}}
\newcommand{\bydef}[1]{\ensuremath{\stackrel{def}{#1}}}
\newcommand{\glue}[1][]{\ensuremath{G^{#1}}}
\newcommand{\comp}[1][]{\ensuremath{{\cal C}^{#1}}}
\newcommand{\actions}{\ensuremath{\mathbf A}}
\newcommand{\powerset}{\ensuremath{\cP_{\omega}}}
\newcommand{\zerobeh}{\ensuremath{\mathbf{0}}}
\newcommand{\issimulatedby}{\ensuremath{\sqsubseteq}}
\newcommand{\lessint}{\ensuremath{\preccurlyeq}}
\newcommand{\sync}[2]{\ensuremath{
    \renewcommand{\arraystretch}{0.3}
    \begin{array}{@{\ }c@{\ }}
      {\scriptstyle #1}\\
      {\bowtie}\\
      {\scriptstyle #2}
    \end{array}
  }
}
\newcommand{\maxwit}[1]{\ensuremath{\stackrel{#1}{\cup}}}
\newcommand{\meet}{\ensuremath{\otimes}}
\newcommand{\bigmeet}{\ensuremath{\bigotimes}}
\newcommand{\arity}[2]{\ensuremath{{#2}^{(#1)}}}
\newcommand{\subsequences}[2][\ ]{\ensuremath{{#1\wr}_{#2}}}
\newcommand{\factor}[1]{\ensuremath{{#1}/\!\!\simeq}}
\newcommand{\derrule}[3][1]{%
  \ensuremath{%
    \begin{array}{*{#1}{@{\hspace{3mm}}c@{\hspace{3mm}}}}
      #2\\
      \hline 
      \multicolumn{#1}{c}{#3}
    \end{array}%
  }%
}
\newlength{\lnframe}
\newlength{\lnframep}
\definecolor{mycgreen}{rgb}{0.13,0.54,0.06}
\title{Towards a Theory of Glue}
\author{Simon Bliudze
\institute{\'Ecole Polytechnique F\'ed\'erale de Lausanne\\
Rigorous System Design Laboratory\\
INJ Building, Station 14, 1015 Lausanne, Switzerland}
\email{simon.bliudze@epfl.ch}
}
\begin{document}
\maketitle

\begin{abstract}
  We propose and study the notions of {\em behaviour type} and {\em
    composition operator} making a first step towards the definition
  of a formal framework for studying behaviour composition in a
  setting sufficiently general to provide insight into how the
  component-based systems should be modelled and compared.  We
  illustrate the proposed notions on classical examples (Traces,
  Labelled Transition Systems and Coalgebras).  Finally, the
  definition of {\em memoryless glue operators}, takes us one step
  closer to a formal understanding of the separation of concerns
  principle stipulating that computational aspects of a system should
  be localised within its atomic components, whereas coordination
  layer responsible for managing concurrency should be realised by
  memoryless glue operators.
\end{abstract}

% *********************************************************************
% *********************************************************************

\section{Introduction} \label{sec:intro}

Component-based design is central in system engineering.  Complex
systems are built by assembling components.  Large components are
obtained by ``gluing'' together simpler ones.  ``Gluing'' can be
considered as an operation on sets of components.

Component-based techniques have seen significant development,
especially through the use of object technologies supported by
languages such as C++, Java, and standards such as UML and CORBA.
There exist various component frameworks encompassing a large variety
of mechanisms for composing components.  They focus rather on the way
components interact than on their internal behaviour.  We lack
adequate notions of expressiveness to compare the merits and
weaknesses of these frameworks.  For a meaningful and systematic
comparison of component frameworks to be possible, one needs a
sufficiently abstract formalisation of the notions of both behaviour
and glue.

These notions should capture the properties essential for reasoning
about system composition.  Intuitively, one can think of coordination
as imposing constraints on the joint behaviour of the components of
the system \cite{BliSif11-constraints-sc, DeconstructingReo,
  Wegner96}.  Imposing coordination constraints means ``reducing'' the
joint behaviour of the involved components.  Beyond the simple
definition of a component model, this requires the following two
questions to be answered:
\begin{itemize}
\item {\em What is the behaviour of several ``parallel'' components
  without any coordination constraints?}
\item {\em How does one compare two behaviours?}
\end{itemize}
We argue that these aspects, as well as the notions of common
behaviour of two components and minimal behaviour possible in a given
framework, cannot be dissociated from the notion of behaviour as a
whole.

There are several goals, for which the work in this paper is a
starting point.

\begin{itemize}
\item There is a long-standing debate with a plethora of points of
  view about {\em separation of concerns} \cite{montanari06,
    clements95-fromsubroutines, coulson08, mckinley04-adaptive-sw,
    rashid02-requirements} and in what sense it should be respected.
  In \cite{BliSif07-acp-emsoft, BliSif10-causal-fmsd}, we have
  advocated for an approach respecting a separation of concerns
  principle, whereby all ``computation'' performed by the system is
  localised within its constituent atomic components, whereas the
  coordination layer responsible for managing the parallelism consists
  of {\em memoryless} glue operators.  Separation of concerns applied
  in the context of the BIP framework resulted in a very powerful
  deadlock detection tool D-Finder \cite{D-Finder2}.  Furthermore, it
  has allowed us to reduce a hard (sometimes undecidable) problem of
  synthesis of reactive systems \cite{vardi09-libraries,
    pnueli90-synthesis-hard} to a less ambitious, but more tractable
  design methodology \cite{BliSif11-constraints-sc}.  This separation
  of concerns principle is rather fragile: in \cite{diGiusto11}, it is
  shown that a slight extension of the BIP glue renders it Turing
  complete, which makes it possible to construct non-trivial systems
  without a single atomic component.  We speculate that, in the
  setting proposed in the present paper, operators such as prefixing
  and choice are not glue operators.  This hypothesis, should it be
  verified, would give formal grounds to our view of the separation of
  concern principle.

\item Another goal is to define a generic setting for an
  expressiveness study. In \cite{BliSif08-express-concur}, we have
  proposed a first notion of glue expressiveness for component-based
  systems, generalising the concepts presented in \cite{Sifakis} that
  guided the design of the BIP (Behaviour, Interaction, Priority)
  framework \cite{bip06}.  However, this approach lacks abstraction,
  since it strongly depends on the choice of the formalism for
  modelling both behaviour (Labelled Transition Systems; LTS) and glue
  (Structural Operational Semantics; SOS \cite{plotkin81structural}),
  which makes the expressiveness comparison questionable and dependent
  on ad-hoc manipulation to make different frameworks comparable.  For
  instance, in a more recent paper \cite{BliSif11-constraints-sc}, we
  have proposed a slightly modified formal model of component
  behaviour in BIP.  This entailed a corresponding modification to the
  definition of glue operators, which resulted in a framework only
  partially comparable to that in \cite{BliSif08-express-concur}.  A
  general definition of behaviour and glue, as in the present paper,
  is necessary to solve this problem.

\item Finally, a common practice in system engineering consists in
  applying existing solutions (architectures) to given components to
  ensure behavioural properties (mutual exclusion, scheduling,
  communication protocols, etc.). These architectures can be modelled
  as composition operators.  Their simultaneous application should
  then ensure the application of the coordination constraints imposed
  by both operators.  Hence, it is important to understand when and
  how these architectures can be combined.  This brings forward
  another question: {\em How does one model the simultaneous
    application of several composition operators and what are the
    conditions ensuring the non-interference among them?}
\end{itemize}

This paper is a modest first step towards the above goals.  It focuses
primarily on the notions of {\em behaviour types} and {\em composition
  operators} defined in such a way as to allow reasoning about their
essential properties in a setting sufficiently general to provide
insight into how component-based systems should be modelled and
compared.

To allow answering the questions that we have emphasised above, a
behaviour type $\cB$ must explicitly comprise certain elements beyond
the minimal component model such as LTS.  As illustrated, for
instance, by CCS and SCCS \cite{milner83-calculi}, different parallel
composition operators can be defined on the same objects.  Hence a
parallel composition operator $\parallel$ must be defined as part of a
behaviour type.  In order to avoid confusion with any of the existing
parallel composition operators, we use the term {\em maximal
  interaction operator}, since, intuitively, the operator defines the
maximal set of interactions between two components in absence of
coordination constraints.

In order to address the question of comparing the behaviour, we
require that two preorders be defined: a {\em simulation preorder}
$\issimulatedby$ and a {\em semantic preorder} $\lessint$.
Intuitively, simulation preorder relates two components if one of them
performs only actions that can also be performed by the other, thus
generalising the classical simulation relation and allowing a
formalisation of the notion that composing component behaviour amounts
to imposing coordination constraints.  The role of the semantic
preorder is to relate components that behave in a similar manner, in
particular, the equivalence $\simeq$ induced by this preorder is a
congruence with respect to composition operators, thus generalising
such classical notions as ready simulation equivalence and
bisimilarity.

Finally, we need a meet operator $\meet$, rendering $(\factor{\cB},
\lessint, \meet)$ a meet-semilattice, to talk about simultaneous
application of two composition operators (\secn{combining}).  Let
$f_1$ and $f_2$ be two composition operators, which we would like to
apply "simultaneously" to a behaviour $B$.  Viewing composition
operators as constraints on $B$, simultaneous application of the two
constraints corresponds intuitively to applying their conjunction.
However, this idea does not fit into the functional view of operators,
since neither $f_1(f_2(B))$ nor $f_2(f_1(B))$ need correspond to the
conjunction of the two constraints.  Moreover, neither of these two
behaviours need be defined, in particular, since the arity constraints
of the operators are not respected in general.  A more direct approach
consists in considering the {\em maximal common behaviour} of $f_1(B)$
and $f_2(B)$, which is precisely $f_1(B) \meet f_2(B)$. 

The paper is structured as follows.  \secn{bt} introduces the notion
of behaviour type and provides three examples: traces, LTS and
coalgebras of a particular type.  \secn{composition} presents the
notion of composition operators and some of their properties.  In
\secn{discussion}, we discuss some related and future work, then we
conclude in \secn{conclusion}.

% *********************************************************************
% *********************************************************************

\section{Behaviour types}
\label{sec:bt}

\begin{definition}[Behaviour type]
  \label{defn:behaviour}
  A \emph{behaviour type} over $\cB$ is a tuple $(\cB, \parallel,
  \issimulatedby, \lessint, \meet, \zerobeh)$ consisting of the following
  data:
  \begin{itemize}
  \item A monoid $(\cB, \parallel, \zerobeh)$, where $\parallel\, : \cB^2
    \rightarrow \cB$ is a totally defined associative binary operator with
    $\zerobeh \in \cB$ the neutral element;
  \item A preorder $\issimulatedby\ \subseteq \cB \times \cB$, such that
    \begin{enumerate}
    \item \label{parallel-simulation} it is preserved by $\parallel$, \ie
      for any $B_1,B_2,B_3 \in \cB$, $B_2 \issimulatedby B_3$ implies $B_1
      \parallel B_2 \issimulatedby B_1 \parallel B_3$,
    \item \label{parallel-bottom} for any $B \in \cB$, holds $\zerobeh
      \issimulatedby B$;
    \end{enumerate}
  \item A preorder $\lessint\ \subseteq \cB \times \cB$ and a meet
    operator $\meet: \cB \times \cB \rightarrow \cB$, such that
    \begin{enumerate}
    \setcounter{enumi}{2}
    \item \label{parallel-semantic} $\lessint$ is preserved by $\parallel$,
      \ie for any $B_1,B_2,B_3 \in \cB$, $B_2 \lessint B_3$ implies $B_1
      \parallel B_2 \lessint B_1 \parallel B_3$,
    \item \label{meet} $(\factor{\cB}, \lessint, \meet)$, with $\simeq
      \bydef{=} \lessint \cap \lessint^{-1}$, is a meet-semilattice.
    \end{enumerate}
  \end{itemize}

  Elements of $\cB$ are \emph{behaviours}.  $\parallel$ is the
  \emph{maximal interaction operator}. Intuitively, $B_1 \parallel B_2$ is
  the parallel composition of $B_1$ and $B_2$ in absence of any
  coordination constraints.  $\issimulatedby$ is the {\em simulation
    preorder}.  It has the same meaning as the classical simulation
  preorder, \ie $B_1 \issimulatedby B_2$ means that ``$B_2$ can perform at
  least the same executions as $B_1$''.  $\lessint$ is the {\em semantic
    preorder}.  Intuitively, $B_1 \lessint B_2$ means that $B_2$ can act as
  $B_1$ in the same coordination context.  $B_1 \meet B_2$ is the greatest
  behaviour realisable by both $B_1$ and $B_2$.
\end{definition}

Often behaviour definition explicitly involves an interface, consisting at
least of the set of actions a component can perform.  In this context, it
is usually assumed that a universal set of actions is given.  Furthermore,
it is convenient also to assume that this set is equipped with some lattice
structure.  A straightforward example consists in considering finite sets
rather than individual actions with the lattice structure induced by set
union and intersection.

% **************************************************************************
\subsection{Example: Traces}
\label{sec:traces}

Let $\actions$ be a universal set of actions and $Traces$ be the set
of pairs $B = (A, T)$, where $A \subseteq \actions$ is a set of
actions and $T \subseteq A^*$ is a prefix-closed set of traces.  In
particular, $\varepsilon \in T$, where $\varepsilon$ is the empty
word.  For $B_1 = (T_1, A_1)$ and $B_2 = (T_2, A_2)$, we define the
maximal interaction operator $\parallel$, as the interleaving of
actions of the two behaviours, by putting $B_1 \parallel B_2 \bydef{=}
(A_1 \cup A_2, T)$, where
\begin{eqnarray}
  \label{eq:traces:parallel}
    T & \bydef{=} &
    \Big\{
    w = (w_i)_{i=1}^n \in A^*
    \,\Big|\,
    \exists I \subseteq [1,n]: (w_i)_{i \in I} \in T_1 
    \land (w_i)_{i \not\in I} \in T_2
    %% \subsequences[w]{A_1} \in T_1,
    %% \subsequences[w]{A_2} \in T_2
    \Big\}\,.    
\end{eqnarray}
This makes $(Traces, \parallel, \zerobeh)$, with $\zerobeh = (\emptyset,
\{\varepsilon\})$, a commutative monoid.
 
Equation \eq{traces:parallel} defines the maximal interaction operator
through interleaving of traces.  The corresponding semantic preorder is
defined through the notion of sub-sequence, which we denote $\propto$ and
define by putting, for $v, w \in A^*$,
\begin{eqnarray*}
  v \propto w 
  & \bydef{\Longleftrightarrow} &
  \exists I \subseteq [1,|w|]: v = (w_i)_{i\in I}\,,
\end{eqnarray*}
where $|w|$ is the length of $w$.  Furthermore, for an alphabet $A$ and a
language $T$, we define $\subsequences[T]{A} \bydef{=} \{v \in A^*\,|\,
\exists w \in T: v \propto w\}$ (notice that, if $T \subseteq A^*$, we have
$T \subseteq \subsequences[T]{A}$).  Finally, we define the semantic
preorder and the operator $\meet$ as follows:
\begin{eqnarray}
  \label{eq:traces:lessint}
  B_1 \lessint B_2 & \bydef{\Longleftrightarrow} &
  A_1 \subseteq A_2 \ \land \ T_1 \subseteq \subsequences[T_2]{A_1}\,,\\
  \label{eq:traces:meet}
  B_1 \meet B_2 & \bydef{=} & (A_1\cap A_2, \subsequences[T_1]{A_1 \cap A_2} 
  \cap \subsequences[T_2]{A_1 \cap A_2})\,,
\end{eqnarray}
the preorder $\issimulatedby$ coincides with $\lessint$.

Clearly, for two alphabets $A \subseteq B$ and a set of traces $T$,
holds the equality $\subsequences[{(\subsequences[T]{B})}]{A} =
\subsequences[T]{A}$, which implies immediately that the operator
$\meet$ defined by \eq{traces:meet} is, indeed, the meet operator with
respect to the preorder $\lessint$.  Hence, $(\factor{Traces},
\lessint, \meet)$ is a meet-semilattice.  To prove that $(Traces,
\parallel, \issimulatedby,$ $\lessint, \meet, \zerobeh)$ is a
behaviour type we only have to show that condition
\ref{parallel-simulation} (and condition \ref{parallel-semantic}) of
\defn{behaviour} holds.

\begin{proposition}
  Maximal interaction operator $\parallel$ defined by \eq{traces:parallel}
  preserves the semantic preorder $\lessint$ defined by
  \eq{traces:lessint}.
\end{proposition}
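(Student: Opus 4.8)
The plan is to verify the implication required by condition \ref{parallel-semantic} of \defn{behaviour}: assuming $B_2 \lessint B_3$, I establish $B_1 \parallel B_2 \lessint B_1 \parallel B_3$ for arbitrary $B_1 = (A_1, T_1)$, $B_2 = (A_2, T_2)$, $B_3 = (A_3, T_3)$. Writing $B_1 \parallel B_2 = (A_1 \cup A_2, T_{12})$ and $B_1 \parallel B_3 = (A_1 \cup A_3, T_{13})$ for the trace components produced by \eq{traces:parallel}, the definition \eq{traces:lessint} reduces the goal to two checks: the alphabet inclusion $A_1 \cup A_2 \subseteq A_1 \cup A_3$, and the trace inclusion $T_{12} \subseteq \subsequences[T_{13}]{A_1 \cup A_2}$. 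The first is immediate, since $B_2 \lessint B_3$ supplies $A_2 \subseteq A_3$; all the real work lies in the second.

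To attack the trace inclusion I would fix a word $w = (w_i)_{i=1}^n \in T_{12}$ and, by \eq{traces:parallel}, choose $I \subseteq [1,n]$ with $u \bydef{=} (w_i)_{i \in I} \in T_1$ and $v \bydef{=} (w_i)_{i \notin I} \in T_2$. Since $B_2 \lessint B_3$ gives $T_2 \subseteq \subsequences[T_3]{A_2}$, there is a word $y \in T_3$ with $v \propto y$. The objective is then to exhibit a single word $w' \in T_{13}$, \ie an interleaving of some element of $T_1$ with some element of $T_3$, satisfying $w \propto w'$; combined with $w \in (A_1 \cup A_2)^*$ (which holds by construction of $T_{12}$) this yields $w \in \subsequences[T_{13}]{A_1 \cup A_2}$, as required.

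The construction, which is the crux of the argument, is to let $w'$ be the interleaving of $u$ and $y$ obtained from $w$ by inserting into it precisely the letters that distinguish $y$ from its subword $v$. Concretely, fix an increasing injection witnessing $v \propto y$ and, for each letter of $y$ outside its image, insert that letter into $w$ between the two $v$-positions of $w$ that bracket it, placing the letters of $y$ preceding the first (resp.\ following the last) $v$-position at the start (resp.\ end). Because $w'$ is obtained from $w$ purely by insertion, $w \propto w'$ is immediate; and because the $u$-positions of $w$ are left untouched while the $v$-positions together with the inserted letters spell out $y$ in order, $w'$ is an interleaving of $u \in T_1$ with $y \in T_3$, whence $w' \in T_{13}$. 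As $u \in A_1^*$ and $y \in A_3^*$, we also have $w' \in (A_1 \cup A_3)^*$.

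The main obstacle I anticipate is essentially bookkeeping rather than ideas: one must check carefully that the inserted letters, read from left to right, reproduce $y$ exactly, and that the resulting partition of the positions of $w'$ into $u$-positions and $y$-positions is a genuine interleaving in the sense of \eq{traces:parallel}. This comes down to confirming that the insertion respects the order prescribed by the chosen embedding $v \propto y$ and never disturbs the relative order of the $u$-letters. Finally, since $\parallel$ is commutative on $Traces$, preservation of $\lessint$ in the second argument established above automatically delivers preservation in the first, so condition \ref{parallel-semantic} holds in full and $(Traces, \parallel, \issimulatedby, \lessint, \meet, \zerobeh)$ is indeed a behaviour type.
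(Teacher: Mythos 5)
Your proof is correct and follows essentially the same route as the paper's: decompose $w \in T_{12}$ via \eq{traces:parallel}, lift its $T_2$-part to a word $y \in T_3$, and insert the letters of $y$ missing from $w$ so as to obtain an interleaving witnessing $w \in \subsequences[T_{13}]{A_1 \cup A_2}$, with the alphabet inclusion $A_1 \cup A_2 \subseteq A_1 \cup A_3$ handled separately. If anything, your variant is marginally more direct: you interleave with the full word $y \in T_3$ and land exactly in $T_{13}$, whereas the paper first projects $y$ onto $A_1 \cup A_2$ and then implicitly uses that an interleaving of subsequence closures lies in the subsequence closure of the interleaving.
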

\begin{proof}
  For $i=1,2,3$, let $B_i = (A_i, T_i) \in Traces$ be such that $B_2
  \lessint B_3$.  By definition \eq{traces:lessint} of $\lessint$, we
  then have $A_2 \subseteq A_3$ and $T_2 \subseteq
  \subsequences[T_3]{A_2}$.  Consequently, $A_1 \cup A_2 \subseteq A_1
  \cup A_3$.

  Let $B_1 \parallel B_2 = (A_1 \cup A_2, T_{12})$ and $B_1 \parallel
  B_3 = (A_1 \cup A_3, T_{13})$, and consider $w = (w_i)_{i=1}^n \in
  T_{12}$. By \eq{traces:parallel}, there exists $I \subset [1,n]$
  such that $(w_i)_{i\in I} \in T_1$ and $(w_i)_{i \not\in I} \in
  T_2$.  Since $T_2 \subseteq \subsequences[T_3]{A_2}$, there exists
  $v \in T_3$ such that $(w_i)_{i \not\in I} \propto v$.  Denoting by
  $\tilde{v}$ the projection of $v$ on $A_1 \cup A_2$, we have
  $\tilde{v} \propto v$, which implies $\tilde{v} \in
  \subsequences[T_3]{A_1 \cup A_2}$.  It is clear that $(w_i)_{i
    \not\in I} \propto \tilde{v}$ and that one can interleave
  $\tilde{v}$ with $(w_i)_{i \in I}$ in such a manner that the
  positions of $w_i$, for all $i \not \in I$, be preserved.  Denoting
  the obtained trace by $u$, we obviously obtain $w \propto u \in
  \subsequences[T_1]{A_1 \cup A_2} \subsequences[T_3]{A_1 \cup A_2} =
  \subsequences[T_{13}]{A_1 \cup A_2}$.  Hence $T_{12} \subseteq
  \subsequences[T_{13}]{A_1 \cup A_2}$ and $B_1 \parallel B_2 \lessint
  B_1 \parallel B_3$.
\end{proof}

% **************************************************************************
\subsection{Example: Labelled Transition Systems}
\label{sec:lts}

Let again $\actions$ be a universal set of actions and let $LTS$ be
the set of triples $B=(Q,A,\goesto)$, with $Q \neq \emptyset$ a finite
set of states, $A \subseteq \actions$ a set of actions, and
$\goesto\ \subseteq Q \times 2^A \times Q$ a set of transitions.  The
infix notation $q \goesto[a] q'$ is commonly used to denote a
transition $(q, a, q') \in \goesto$.  Note that, with this definition,
transitions are labelled with \emph{interactions}, \ie sets of
actions.  This approach is particularly well suited for characterising
behaviour of components that can communicate through several ports
during a single transition \cite{BliSif07-acp-emsoft}.  Such behaviour
naturally arises when composite components are assembled by parallel
composition of sub-components.  Thus, labels are combined by set
union, often denoted by juxtaposition.

As in the previous example, we take coinciding preorders
$\issimulatedby$ and $\lessint$.  For $B_1 = (Q_1, A_1, \goesto_1)$
and $B_2 = (Q_2, A_2, \goesto_2)$, we define the maximal interaction
operator $\parallel$ and the semantic preorder $\lessint$ as follows.
We put $B_1 \parallel B_2 \bydef{=} (Q_1 \times Q_2,\ A_1 \cup
A_2,\ \goesto)$, where $\goesto$ is the minimal transition relation
satisfying the following SOS rules:

\begin{equation}
  \label{eq:lts:parallel}
  \begin{array}{*{3}{c}}
    \derrule{q_1 \goesto[a]_1 q_1'}{q_1q_2 \longgoesto[a] q_1' q_2}\,, &
    \derrule{q_2 \goesto[b]_2 q_2'}{q_1q_2 \longgoesto[b] q_1 q_2'}\,, &
    \derrule[2]
            {q_1 \goesto[a]_1 q_1' & q_2 \goesto[b]_2 q_2'}
            {q_1q_2 \longgoesto[a \cup b] q_1' q_2'}\,.
  \end{array}
\end{equation}

With $\zerobeh = (\{1\}, \emptyset, \emptyset)$, it is straightforward to
conclude that $(LTS, \parallel, \zerobeh)$ is a commutative monoid.

For $B_1 = (Q_1, A_1, \goesto_1)$ and $B_2 = (Q_2, A_2, \goesto_2)$, such
that $A_1 \subseteq A_2$ consider the maximal {\em simulation} relation
$R \subseteq Q_1 \times Q_2$ such that
\begin{eqnarray}
  \label{eq:lts:less}
  q_1 R q_2 & \Longrightarrow &
  \forall q_1 \goesto[a]_1 q_1',\   
  \exists\, q_2' \in Q_2, b \subseteq A_2:
  \left(q_2 \goesto[b] q_2'\ \land\ a\subseteq b\ \land\ q_1' R q_2'\right)\,.
\end{eqnarray}
The semantic preorder $\lessint$, is defined by putting $B_1 \lessint B_2$
iff $R$ is total on $Q_1$.

Finally, we define the meet operator by putting, for $B_1 = (Q_1, A_1,
\goesto_1)$ and $B_2 = (Q_2, A_2, \goesto_2)$, $B_1 \meet B_2 \bydef{=}
(Q_1 \times Q_2, A_1 \cap A_2, \goesto)$, where $\goesto$ is the minimal
transition relation satisfying the rule
\begin{equation}
  \label{eq:lts:meet}
  \derrule[3] {q_1 \goesto[a]_1 q_1' & 
    q_2 \goesto[b]_2 q_2' & 
    a \cap b \neq \emptyset
  }{q_1q_2 \longgoesto[a \cap b] q_1'q_2'}\,.  
\end{equation}

Conditions \ref{parallel-simulation}\ndash\ref{parallel-semantic} of the
\defn{behaviour} clearly hold, and, to show that $(LTS, \parallel,
\issimulatedby, \lessint, \meet, \zerobeh)$ is a behaviour type, we only
have to prove the following proposition.

\begin{proposition}
  $(\factor{LTS}, \lessint, \meet)$ is a meet-semilattice.
\end{proposition}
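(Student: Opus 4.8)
The plan is to prove that for every pair $B_1 = (Q_1, A_1, \goesto_1)$ and $B_2 = (Q_2, A_2, \goesto_2)$ the behaviour $B_1 \meet B_2$ is a greatest lower bound of $B_1$ and $B_2$ with respect to $\lessint$. Since $\lessint$ is a preorder, it descends to a partial order on $\factor{LTS}$; once binary greatest lower bounds are shown to exist (they are unique up to $\simeq$), the operator $\meet$ is automatically well defined on $\simeq$-classes and yields the semilattice meet, so $(\factor{LTS}, \lessint, \meet)$ is a meet-semilattice. I first record that $B_1 \meet B_2 \in LTS$: the state set $Q_1 \times Q_2$ is finite and nonempty (both factors are), and $A_1 \cap A_2 \subseteq \actions$.

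For the lower-bound part I would exhibit explicit simulations given by the two projections. To show $B_1 \meet B_2 \lessint B_1$, note $A_1 \cap A_2 \subseteq A_1$ and take $R = \{\,((q_1,q_2), q_1) : q_1 \in Q_1,\ q_2 \in Q_2\,\} \subseteq (Q_1 \times Q_2) \times Q_1$, which is trivially total. It is a simulation because every meet-transition $(q_1,q_2) \longgoesto[a \cap b] (q_1',q_2')$ arises, by \eq{lts:meet}, from $q_1 \goesto[a]_1 q_1'$, and is matched by this very transition: $a \cap b \subseteq a$ and $(q_1',q_2')\, R\, q_1'$. Commutativity of the meet rule gives $B_1 \meet B_2 \lessint B_2$ symmetrically.

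For the greatest-lower-bound part, let $C = (Q, A, \goesto_C)$ satisfy $C \lessint B_1$ and $C \lessint B_2$, witnessed by total simulations $R_1 \subseteq Q \times Q_1$ and $R_2 \subseteq Q \times Q_2$; in particular $A \subseteq A_1$ and $A \subseteq A_2$, hence $A \subseteq A_1 \cap A_2$. I would form the product relation $R = \{\,(q,(q_1,q_2)) : q\,R_1\,q_1 \ \land\ q\,R_2\,q_2\,\}$, total on $Q$ since $R_1$ and $R_2$ are. To check it is a simulation, take $q\,R\,(q_1,q_2)$ and a transition $q \goesto[c]_C q'$; then $R_1$ yields $q_1 \goesto[a]_1 q_1'$ with $c \subseteq a$ and $q'\,R_1\,q_1'$, while $R_2$ yields $q_2 \goesto[b]_2 q_2'$ with $c \subseteq b$ and $q'\,R_2\,q_2'$. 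Thus $c \subseteq a \cap b$, the meet rule produces $(q_1,q_2) \longgoesto[a \cap b] (q_1',q_2')$, and $q'\,R\,(q_1',q_2')$, so $q \goesto[c]_C q'$ is matched. This gives $C \lessint B_1 \meet B_2$.

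The step I expect to be the crux is firing the meet rule in the last paragraph, which by \eq{lts:meet} requires $a \cap b \neq \emptyset$. This is where non-emptiness of the interactions labelling transitions is essential: since $c \neq \emptyset$ and $c \subseteq a \cap b$, we get $a \cap b \neq \emptyset$ and the transition indeed exists. Were empty labels admitted, a transition $q \goesto[\emptyset]_C q'$ could be simulated in $B_1$ and $B_2$ by disjoint interactions $a$ and $b$, the meet would generate no matching transition, and the greatest-lower-bound property would fail; so I would make the non-emptiness convention explicit. Everything else is routine relational bookkeeping, including the concluding remark that uniqueness of greatest lower bounds up to $\simeq$ makes $\meet$ well defined on $\factor{LTS}$.
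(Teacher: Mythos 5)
Your proof is correct and follows essentially the same route as the paper's: the two projection relations witness $B_1 \meet B_2 \lessint B_1, B_2$, and the product relation $R = R_1 \times R_2$ (pairing the two given simulations) witnesses $C \lessint B_1 \meet B_2$ for any lower bound $C$. One point in your favour: the side condition $a \cap b \neq \emptyset$ of rule \eq{lts:meet}, which you flag as the crux, is indeed silently skipped in the paper's proof --- the paper concludes that $c \subseteq a \cap b$ and $\tilde{q}' R (q_1', q_2')$ ``satisfies \eq{lts:less}'' without checking that the meet transition actually exists --- so your explicit non-emptiness convention on labels (or, equivalently, a separate treatment of transitions labelled $\emptyset$) is a genuine tightening of the argument rather than pedantry.
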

\begin{proof}
To prove the proposition, we have to show that, for any $\tilde{B},
B_1,B_2 \in LTS$, holds $B_1 \meet B_2 \lessint B_1, B_2$ and $\tilde{B}
\lessint B_1, B_2$ implies $\tilde{B} \lessint B_1 \meet B_2$.

Let $\tilde{B} = (\tilde{Q}, \tilde{A}, \goesto)$, $B_i = (Q_i, A_i,
\goesto)$ (for $i=1,2$) and $B_1 \meet B_2 = (Q_1 \times Q_2, A_1 \cap A_2,
\goesto)$ as defined above (for clarity, we skip the indices on the
transition relations).

1. First of all $A_1 \cap A_2 \subseteq A_1, A_2$. By symmetry, it is
sufficient to show that $B_1 \meet B_2 \lessint B_1$.  Let $R \subseteq
(Q_1 \times Q_2) \times Q_1$ be the projection on the first component.

Consider a state $q = (q_1, q_2) \in Q_1\times Q_2$. By definition of
$\meet$, for any $q \goesto[c] q'$ in $B_1 \meet B_2$, there exist
transitions $q_1 \goesto[a] q_1'$ and $q_2 \goesto[b] q_2'$ such that $a
\cap b = c$ and $q' = (q_1', q_2')$.  Hence, we have $c \subseteq a$ and
$q'R q_1'$, satisfying the implication \eq{lts:less} is satisfied. Since
$R$ is total on $Q_1 \times Q_2$, we have $B_1 \meet B_2 \lessint B_1$.

2. Since $\tilde{B} \lessint B_1, B_2$, there exist total relations $R_i
\subseteq \tilde{Q} \times Q_i$ (for $i=1,2$) satisfying the implication
\eq{lts:less}.  We define a relation $R \subseteq \tilde{Q} \times (Q_1,
Q_2)$ by putting $\tilde{q} R (q_1,q_2)$ iff $\tilde{q} R_i q_i$, for
both $i=1,2$.  Since $\tilde{A} \subseteq A_1 \cap A_2$ and $R$ is
clearly total, we only have to show that \eq{lts:less} is satisfied.

Let $\tilde{q} \in \tilde{Q}$ and $(q_1,q_2) \in Q_1 \times Q_2$ be related
by $R$.  Consider a transition $\tilde{q} \goesto[c] \tilde{q}'$ in
$\tilde{B}$.  Since $\tilde{q} R_i q_i$, for $i=1,2$, there exist two
transitions $q_1 \goesto[a] q_1'$ and $q_2 \goesto[b] q_2'$ such that $c
\subseteq a, b$ and $\tilde{q}' R_i q_i'$. We then have $c \subseteq a \cap
b$ and $\tilde{q}' R (q_1', q_2')$, which satisfies \eq{lts:less} and
proves the proposition.
\end{proof}

%% It is easy to see that $LTS$ is not distributive, by considering, for
%% example, three behaviours $B_1 = (\{1, 1'\}, \{a\}, \{1 \goesto[a]
%% 1'\})$, $B_2 = (\{2, 2'\}, \{b\}, \{2 \goesto[b] 2'\})$ and $B_3 =
%% (\{3,3'\}, \{a,b\}, \{3 \goesto[a] 3', 3' \goesto[b] 3\})$.

% **************************************************************************
\subsection{Example: Coalgebras}
\label{sec:coalgebras}

Coalgebras \cite{Rutten00} provide a general framework for unifying
various state-based behaviour models such as both deterministic and
non-deterministic automata, LTS, Mealy machines, etc.  The
presentation below is largely inspired by that in \cite{Silva-PhD}.

We recall, in \tab{operations}, the extensions from sets to mappings
of the following four operations: Cartesian product $X \times Y$,
disjoint union $X + Y$, exponentiation $X^A$ (set of mappings $A
\rightarrow X$) and powerset $\powerset(X)$ (set of finite subsets of
$X$).

\begin{table}
  \caption{Extensions of set operations to mappings ($f: X \rightarrow
    Y$, $f_1: X_1 \rightarrow Y_1$ and $f_2: X_2 \rightarrow Y_2$)}
  \label{tab:operations}
  \centering
  \begin{tabular}{|*{2}{@{\qquad}c@{\qquad}|}}
    \hline
    $
    \begin{gathered}
      f_1 \times f_2 : X_1 \times X_2 \rightarrow Y_1 \times Y_2\\
      (x_1, x_2) \mapsto \Big(f_1(x_1), f_2(x_2)\Big)
    \end{gathered}
    $
    &
    $
    \begin{gathered}
      f_1 + f_2 : X_1 + X_2 \rightarrow Y_1 + Y_2\\
      x \mapsto \kappa_i(f_i(x'))
    \end{gathered}
    $
    \\
    \hline
    $
    \begin{gathered}
      f^A : X^A \rightarrow Y^A\\
      g \mapsto f \circ g
    \end{gathered}
    $
    &
    $
    \begin{gathered}
      \powerset(f) : \powerset(X) \rightarrow \powerset(Y)\\
      S \mapsto \{f(x)\,|\, x\in S\}
    \end{gathered}
    $
    \\
    \hline
  \end{tabular}
\end{table}

Let $F$ be a functor on $\mathbf{Set}$.  An $F$-coalgebra is a pair
$(S, f: S \rightarrow F(S))$, where $S$ is the set of states.  The
mapping $f$ determines the transition structure of $(S,f)$, whereas
the functor $F$ is the {\em type} of the coalgebra.

\begin{example}
  Functors $M = (\bB \times \id)^\actions$, $D = \sB \times (1 +
  \id)^\actions$ and $N = \sB^2 \times \powerset(\id)^\actions$, where
  $\sB = \{0,1\}$ and $1 = \{*\}$, are respectively the types of the
  coalgebraic definitions for, respectively, input-enabled Mealy
  machines with the input domain $\actions$ and output domain $\bB$,
  deterministic and non-deterministic automata.  For instance, in a
  $D$-coalgebra $(S, f: S \rightarrow D(S))$, the components of the
  mapping $f = \omega \times \delta$, with $\omega: S \rightarrow \sB$
  and $\delta: S \rightarrow (1 + S)^A$, determine, for each state $s
  \in S$, whether it is a final state (mapping $\omega$) and the set
  $\Big\{\Big(s, a, \delta(s)(a)\Big) \,\Big|\, a \in A, \delta(s)(a)
  \in S\Big\}$ of transitions leaving $s$.  The case where there is no
  transition labelled $a$ leaving the state $s$ is reflected by
  $\delta(s)(a) = * \in 1$.  Notice that, in this example, we do not
  account for initial states of automata; determinism, here, means
  that only one transition is possible from any given state with a
  given action.
\end{example}

We define a class $\ndf$ of {\em non-deterministic functors}\footnote{
  $\ndf$ is a subclass of the class $NDF$ of non-deterministic functors as
  defined, for example, in \cite{Silva-PhD}.
} on the category $\mathbf{Set}$ of sets, defined by the following
grammar
\begin{equation}
  \label{eq:ndf}
  F ::= \id \,|\, \bB \,|\, F \times F \,|\, F^{\actions}
  \,|\, \powerset(F)
  \,,
\end{equation}
where $\actions$ is the universal set of actions, $\id$ is the identity
functor and $\bB \neq \emptyset$ is a join-semilattice with bottom.
Typical examples of semilattices used to define non-deterministic functors
are $\sB = \{0,1\}$ with $0 \lor 1 = 1$ and $0 \land 1 = 0$; and the
trivial lattice $1 = \{*\}$.  The functors in $NDF-$ are defined by
structural induction in \tab{ndf} (\cf also \tab{operations}).

\begin{table}
  \caption{Structural induction definition of $\ndf$ functors (for any
    set $X$ and any mapping $f: X \rightarrow Y$)}
  \label{tab:ndf}
  \centering
  \begin{tabular}{*{2}{|c}|l|}
    \hline
    $F(X)$ & $F(f)$ & Structure of $F$
    \\\hline\hline
    $X$ & $f$ & $F = \id$
    \\\hline
    $\bB$ & $id_{\bB}$ & $F = \bB$
    \\\hline
    $F_1(X) \times F_2(X)$ & $F_1(f) \times F_2(f)$ & $F = F_1 \times F_2$
    \\\hline
    $G(X)^\actions$ & $G(f)^\actions$ & $F = G^\actions$
    \\\hline
    $\powerset(G(X))$ & $\powerset(G(f))$ & $F = \powerset(G)$
    \\\hline
  \end{tabular}
\end{table}

Consider $F \in \ndf$ and let $(S_1, f_1)$ and $(S_2, f_2)$ be two
$F$-coalgebra.  For a relation $R \subseteq S_1 \times S_2$, we define
a relation $\leq_R^F\ \subseteq F(S_1) \times F(S_2)$ by structural
induction on $F$.  For $x \in F(S_1)$ and $y \in F(S_2)$, we put
\begin{eqnarray*}
  x \leq_R^F y & \bydef{\Longleftrightarrow} &
  \begin{cases}
    (x, y) \in R, & \text{if $F = \id$,}\\
    x \lor y = y, & \text{if $F = \bB$,}\\
    x_1 \leq_R^{F_1} y_1 \land x_2 \leq_R^{F_2} y_2, & 
    \text{if $F = F_1 \times F_2$, $x = (x_1,x_2)$ and $y = (y_1,y_2)$,}\\
    \forall a \in \actions,\ x(a) \leq_R^G y(a), & 
    \text{if $F = G^\actions$,}\\
    \forall x' \in x, \exists y' \in y: x' \leq_R^G y', &
    \text{if $F = \powerset(G)$.}
  \end{cases}
\end{eqnarray*}

We define the simulation preorder on $F$-coalgebra by putting $(S_1,
f_1) \issimulatedby (S_2, f_2)$, iff there exists a relation $R
\subseteq S_1 \times S_2$ total on $S_1$ and such that $\forall (s_1,
s_2) \in R,\ f_1(s_1) \leq_R^F f_2(s_2)$.

\begin{definition}[Coalgebra homomorphism]
  An {\em $F$-homomorphism} of two $F$-coalgebras $(S_1,f_1)$ and
  $(S_2,f_2)$ is a mapping $h: S_1 \rightarrow S_2$ preserving the
  transition structure, \ie such that the following diagram commutes
  ($f_2 \circ h = F(h) \circ f_1$)
  \[
    \begin{CD}
      S_1 @>h>> S_2\\
      @V{f_1}VV   @VV{f_2}V\\
      F(S_1) @>>{F(h)}> F(S_2)
    \end{CD}
  \]
  where $F(h)$ is the image of $h$ by the functor $F$.  In particular, for
  $F \in \ndf$, $F(h)$ is defined by the second column in \tab{ndf}.
\end{definition}

\begin{definition}[Bisimulation]
  \label{defn:bisimulation}
  Let $(S_1, f_1)$ and $(S_2, f_2)$ be two $F$-coalgebras.  A relation
  $R \subseteq S_1 \times S_2$ is a {\em bisimulation} iff there
  exists a {\em witness mapping} $g: R \rightarrow F(R)$, such that
  projections $\pi_i: R \rightarrow S_i$, for $i = 1,2$, are coalgebra
  homomorphisms, \ie the following diagram commutes
  \begin{equation}
    \label{eq:bisimulation}
    \begin{CD}
      S_1 @<{\pi_1}<< R @>{\pi_2}>> S_2\\
      @V{f_1}VV     @VgVV          @VV{f_2}V\\
      F(S_1) @<<{F(\pi_1)}< F(R) @>>{F(\pi_2)}> F(S_2)
    \end{CD}
  \end{equation}
\end{definition}

\begin{note}
  \label{rem:subbisimulation}
  In the context of \defn{bisimulation}, if $F = F_1 \times F_2$, $g =
  g_1 \times g_2$ and $f_i = f_i^1 \times f_i^2$ ($i=1,2$), the left
  diagram in \eq{subbisimulation} commutes, for $j=1,2$, and defines a
  bisimulation on $F_j$-coalgebras $(S_1, f_1^j)$ and $(S_2, f_2^j)$.
  Similarly, for $F = G^\actions$, the right diagram in
  \eq{subbisimulation} commutes, for all $a \in \actions$, and defines
  a bisimulation on $G$-coalgebras $(S_1, f_1(a))$ and $(S_2,
  f_2(a))$.
  \begin{align}
    \label{eq:subbisimulation}
    \begin{CD}
      S_1 @<{\pi_1}<< R @>{\pi_2}>> S_2\\
      @V{f_1^j}VV     @Vg_jVV          @VV{f_2^j}V\\
      F_j(S_1) @<<{F_j(\pi_1)}< F_j(R) @>>{F_j(\pi_2)}> F_j(S_2)
    \end{CD}
    &&
    \begin{CD}
      S_1 @<{\pi_1}<< R @>{\pi_2}>> S_2\\
      @V{f_1(a)}VV     @Vg(a)VV          @VV{f_2(a)}V\\
      G(S_1) @<<{G(\pi_1)}< G(R) @>>{G(\pi_2)}> G(S_2)
    \end{CD}
  \end{align} 
\end{note}

\begin{lemma}[\cite{Rutten00}]
  \label{lem:bisim:img}
  Consider coalgebras $(S, f)$, $(S_1,f_1)$ and $(S_2, f_2)$ with
  coalgebra homomorphisms $g: S \rightarrow S_1$ and $h: S \rightarrow
  S_2$.  The image $(g,h)(S) = \{(g(s),h(s))\,|\, s\in S\}$ is a
  bisimulation on $S_1$ and $S_2$.
\end{lemma}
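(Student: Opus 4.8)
The plan is to exhibit an explicit witness map by pushing the coalgebra structure $f$ forward along the pairing of $g$ and $h$. I would write $m = \langle g, h\rangle : S \to S_1 \times S_2$, $s \mapsto (g(s), h(s))$, take $R = (g,h)(S)$ to be its image, and let $e : S \to R$ be the corestriction of $m$, so that $e$ is surjective and the two projections $\pi_1 : R \to S_1$, $\pi_2 : R \to S_2$ satisfy $\pi_1 \circ e = g$ and $\pi_2 \circ e = h$. To realise $R$ as a bisimulation in the sense of \defn{bisimulation}, I must produce a witness $\gamma : R \to F(R)$ turning $\pi_1$ and $\pi_2$ into coalgebra homomorphisms. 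The natural candidate is the unique map $\gamma$ determined by $\gamma \circ e = F(e) \circ f$.

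Granting for the moment that $\gamma$ is well defined, checking that $\pi_1$ and $\pi_2$ are homomorphisms will be a one-line diagram chase. Precomposing with the epimorphism $e$ gives $F(\pi_1)\circ\gamma\circ e = F(\pi_1 \circ e)\circ f = F(g)\circ f = f_1 \circ g = f_1 \circ \pi_1 \circ e$, where the middle equality uses that $g$ is a homomorphism; since $e$ is surjective this yields $F(\pi_1)\circ\gamma = f_1\circ\pi_1$, and symmetrically $F(\pi_2)\circ\gamma = f_2\circ\pi_2$. So the whole argument will hinge on the existence of $\gamma$, i.e.\ on well-definedness: as $e$ is a surjection, $F(e)\circ f$ factors through $e$ iff it is constant on the fibres of $e$, that is, whenever $g(s)=g(s')$ and $h(s)=h(s')$ one must have $F(e)(f(s)) = F(e)(f(s'))$.

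The two one-sided versions of this equality are immediate: from $f_1\circ g = F(g)\circ f$ and $g(s)=g(s')$ I get $F(g)(f(s)) = F(g)(f(s'))$, and likewise $F(h)(f(s)) = F(h)(f(s'))$. The hard part, which I expect to be the main obstacle, is combining these two one-sided equalities into the joint equality for $F(e)$, and this is precisely where the structure of $F$ enters. For $F = \id$, $F = \bB$, $F = F_1\times F_2$ and $F = G^{\actions}$ the combination is routine by structural induction on $F$. The delicate case is $F = \powerset(G)$: knowing that each element of $f(s)$ is matched through $g$ by some element of $f(s')$, and (separately) matched through $h$ by some, possibly different, element of $f(s')$, does not by itself produce a single element of $f(s')$ matched through both.

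I would resolve this by factoring the proof through a preliminary claim that every functor in $\ndf$ preserves weak pullbacks (identity, constants, products and exponentials trivially, and the finite powerset $\powerset$ by the classical matching argument), from which the joint equality, and hence well-definedness of $\gamma$, can be read off. An equivalent and more modular packaging of the same content is to note that $R = (g,h)(S)$ is the relational composite of the converse of the graph of $g$ with the graph of $h$; since graphs of homomorphisms and their converses are bisimulations, and bisimulations compose exactly when $F$ preserves weak pullbacks, the lemma follows. Either way, the weak-pullback property of $\powerset$ is the sole nontrivial ingredient, so I would concentrate the real work there.
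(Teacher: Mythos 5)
There is a genuine gap, and it sits exactly where you predicted the difficulty would be: the map $\gamma$ with $\gamma \circ e = F(e) \circ f$ need not exist, and weak-pullback preservation does \emph{not} repair this. Well-definedness requires $F(e)(f(s)) = F(e)(f(s'))$ whenever $e(s)=e(s')$; your one-sided equalities only say that these two elements of $F(R)$ agree after applying $F(\pi_1)$ and after applying $F(\pi_2)$, so what you actually need is that the pair $\big(F(\pi_1),F(\pi_2)\big)$ is jointly injective on $F(R)$. For $F=\powerset(\id)$ this is false (the sets $\{(a,1),(b,2)\}$ and $\{(a,2),(b,1)\}$ have the same two projections), and the failure is realised by genuine coalgebras: take $S_1=\{a,b\}$ with $f_1(a)=\{a,b\}$, $f_1(b)=\{b\}$; $S_2=\{1,2\}$ with $f_2(1)=\{1,2\}$, $f_2(2)=\{2\}$; $S=\{s,s',u_{a1},u_{a2},u_{b1},u_{b2}\}$ with $g$ sending $s,s',u_{a1},u_{a2}$ to $a$ and $u_{b1},u_{b2}$ to $b$, $h$ sending $s,s',u_{a1},u_{b1}$ to $1$ and $u_{a2},u_{b2}$ to $2$, and $f(s)=\{u_{a1},u_{b2}\}$, $f(s')=\{u_{a2},u_{b1}\}$, $f(u_{a1})=\{u_{a1},u_{b2}\}$, $f(u_{a2})=\{u_{a2},u_{b2}\}$, $f(u_{b1})=\{u_{b1},u_{b2}\}$, $f(u_{b2})=\{u_{b2}\}$. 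One checks that $g$ and $h$ are homomorphisms, and $e(s)=e(s')=(a,1)$, yet $F(e)(f(s))=\{(a,1),(b,2)\} \neq \{(a,2),(b,1)\}=F(e)(f(s'))$. Since the finite powerset functor \emph{does} preserve weak pullbacks (and lies in $\ndf$), your claim that weak-pullback preservation yields the joint equality cannot be right; no property of $\ndf$ can save this construction. (This example also illustrates why the paper remarks, after \thm{bisim:union}, that witness mappings are not unique: at $(a,1)$ both $\{(a,1),(b,2)\}$ and $\{(a,2),(b,1)\}$ are legitimate local witnesses.)

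The paper itself gives no proof (it cites \cite{Rutten00}), and Rutten's argument sidesteps well-definedness entirely: since $e: S \rightarrow R$ is surjective, choose a set-theoretic section $m: R \rightarrow S$ with $e \circ m = id_R$, and define $\gamma \bydef{=} F(e) \circ f \circ m$. Then $F(\pi_1)\circ\gamma = F(\pi_1 \circ e)\circ f\circ m = F(g)\circ f \circ m = f_1\circ g\circ m = f_1\circ \pi_1\circ e\circ m = f_1\circ\pi_1$, and symmetrically for $\pi_2$; only the homomorphism property of $g=\pi_1\circ e$ and $h=\pi_2\circ e$ is used, no constancy on fibres is needed, and the argument works for an arbitrary $\mathbf{Set}$-functor, matching the generality at which the lemma is stated. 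Your ``modular packaging'' is a correct \emph{second} proof, but it is not an equivalent repackaging of your primary construction: writing $R$ as the relational composite of the converse graph of $g$ with the graph of $h$, and invoking that graphs of homomorphisms (and converses of bisimulations) are bisimulations together with composition closure (\lem{bisim:comp}), produces \emph{some} witness on $R$ obtained from the composition construction, not the $\gamma$ you attempted to define, and it only applies to weak-pullback-preserving functors such as those in $\ndf$ rather than to all functors covered by the lemma. So either promote that modular route (adequate for every use the paper makes of the lemma) or, better, replace the quotient argument by the section argument.
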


\begin{theorem}[\cite{Rutten00}]
  \label{thm:bisim:union}
  The union $\bigcup_k R_k$ of a family $\{R_k\}_k$ of bisimulations
  on coalgebras $(S_1,f_1)$ and $(S_2,f_2)$ is again a bisimulation.
\end{theorem}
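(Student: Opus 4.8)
The plan is to avoid constructing a witness mapping on $R = \bigcup_k R_k$ directly, and instead to exhibit $R$ as the image of a single coalgebra under a pair of homomorphisms, so that \lem{bisim:img} applies. By \defn{bisimulation}, each $R_k$ carries a witness $g_k: R_k \to F(R_k)$ turning $(R_k, g_k)$ into an $F$-coalgebra whose projections $\pi_1^k: R_k \to S_1$ and $\pi_2^k: R_k \to S_2$ are coalgebra homomorphisms. First I would form the disjoint union $U = \coprod_k R_k$ in $\mathbf{Set}$, with coprojections $\kappa_k: R_k \to U$, and equip it with the coalgebra structure $\gamma: U \to F(U)$ determined on each summand by $\gamma \circ \kappa_k = F(\kappa_k) \circ g_k$. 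This is well defined precisely because $U$ is a coproduct, and by construction every $\kappa_k$ is a homomorphism $(R_k, g_k) \to (U, \gamma)$.

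Next I would assemble the two families of projections into maps out of $U$. Let $p_i = [\pi_i^k]_k: U \to S_i$ be the copairing, so that $p_i \circ \kappa_k = \pi_i^k$ for all $k$ and $i = 1,2$. I claim each $p_i$ is a coalgebra homomorphism, which I check summand by summand: on the one hand $f_i \circ p_i \circ \kappa_k = f_i \circ \pi_i^k = F(\pi_i^k) \circ g_k$, because $\pi_i^k$ is a homomorphism; on the other hand $F(p_i) \circ \gamma \circ \kappa_k = F(p_i) \circ F(\kappa_k) \circ g_k = F(p_i \circ \kappa_k) \circ g_k = F(\pi_i^k) \circ g_k$. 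Since the two composites agree after precomposition with every $\kappa_k$ and $U$ is their joint coproduct, $f_i \circ p_i = F(p_i) \circ \gamma$, as required.

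Finally I would apply \lem{bisim:img} to the coalgebra $(U, \gamma)$ together with the homomorphisms $p_1: U \to S_1$ and $p_2: U \to S_2$: the image $(p_1, p_2)(U) = \{(p_1(u), p_2(u)) \mid u \in U\}$ is a bisimulation on $S_1$ and $S_2$. It then only remains to identify this image. Every $u \in U$ is of the form $\kappa_k(s_1, s_2)$ for some $k$ and some $(s_1, s_2) \in R_k$, whence $(p_1(u), p_2(u)) = (\pi_1^k(s_1, s_2), \pi_2^k(s_1, s_2)) = (s_1, s_2)$; conversely every element of every $R_k$ arises this way. Thus $(p_1, p_2)(U) = \bigcup_k R_k = R$, and $R$ is a bisimulation.

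I expect the only genuinely delicate point to be the coproduct construction itself, rather than any of the diagram chasing. A naive attempt to define a witness directly on $R$ by picking, for each pair, one index $k$ and transporting $g_k$ along the inclusion $R_k \hookrightarrow R$ need not be well defined, since distinct witnesses may disagree on a shared pair; passing to the disjoint union keeps the summands apart and lets \lem{bisim:img} perform the collapse consistently, which is why routing the argument through the image lemma is the right move.
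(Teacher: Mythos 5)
Your proof is correct, and it is essentially the intended argument: the paper states this theorem without proof, as a citation to \cite{Rutten00}, and the proof in that source is exactly your construction---equip the disjoint union $\coprod_k R_k$ with the summand-wise coalgebra structure, observe that the copaired projections are homomorphisms, and collapse onto $\bigcup_k R_k$ via \lem{bisim:img}, which the paper states immediately before the theorem for precisely this purpose. One caveat on your closing remark: the ``naive'' route is not actually broken, since choosing one index $k$ per pair (by choice, for an infinite family) and setting $g(r) = F(\iota_k)(g_k(r))$ gives a perfectly well-defined witness whose homomorphism condition holds pointwise because $\pi_i \circ \iota_k = \pi_i^k$; the coproduct detour is the cleaner, choice-free formulation, but it is not forced by a well-definedness obstruction.
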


\begin{lemma}
  \label{lem:bisim:comp}
  Consider $F$-coalgebras $(S_1, f_1)$, $(S_2,f_2)$ and $(S_3, f_3)$
  with $F \in \ndf$.  The composition $R_1 \circ R_2$ of two
  bisimulations $R_1 \subseteq S_1\times S_2$ and $R_2 \subseteq S_2
  \times S_3$ is a bisimulation on $S_1$ and $S_3$.
\end{lemma}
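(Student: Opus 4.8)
The plan is to reduce the statement to the preservation of weak pullbacks by the functors in \ndf, and then to conclude via \lem{bisim:img}. Recall that a commuting square with apex $P$ and legs $p\colon P\to R_1$, $q\colon P\to R_2$ over a common target $S_2$ (reached by the $S_2$-projection of $R_1$ and the $S_2$-projection of $R_2$) is a \emph{weak pullback} when every competing cone factors through $P$, the factorisation not being required to be unique. Concretely I would take $P = \{(s_1,s_2,s_3)\mid (s_1,s_2)\in R_1,\ (s_2,s_3)\in R_2\}$ with $p(s_1,s_2,s_3)=(s_1,s_2)$ and $q(s_1,s_2,s_3)=(s_2,s_3)$; this is the pullback of the span $R_1\to S_2\leftarrow R_2$ formed by the relevant projections, and the image of $P$ under $(s_1,s_2,s_3)\mapsto(s_1,s_3)$ is exactly $R_1\circ R_2$.

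First I would lift a coalgebra structure to $P$. Let $g_1\colon R_1\to F(R_1)$ and $g_2\colon R_2\to F(R_2)$ be the witness mappings from \defn{bisimulation}. Because the projections of $R_1$ and of $R_2$ are coalgebra homomorphisms, the two composites $P\to F(S_2)$ obtained by following $g_1\circ p$ with $F$ of the $S_2$-projection of $R_1$, respectively $g_2\circ q$ with $F$ of the $S_2$-projection of $R_2$, both equal $f_2$ precomposed with the common map $P\to S_2$, and hence agree. Assuming $F$ preserves weak pullbacks, $F(P)$ is a weak pullback of $F(R_1)\to F(S_2)\leftarrow F(R_2)$, so the cone $(g_1\circ p,\ g_2\circ q)$ factors through it, yielding a mapping $g_P\colon P\to F(P)$ that makes both $p$ and $q$ into homomorphisms. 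Composing with the outer projections then makes $P\to S_1$ and $P\to S_3$ homomorphisms into $(S_1,f_1)$ and $(S_3,f_3)$, and \lem{bisim:img} identifies their image, namely $R_1\circ R_2$, as a bisimulation, which is the claim.

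It then remains to prove that every $F\in\ndf$ preserves weak pullbacks, which I would do by structural induction along the grammar \eq{ndf}. The identity functor preserves all limits, and the constant functor \bB\ sends every span to one built from identities on \bB, so both base cases are immediate. For $F=F_1\times F_2$ and for $F=G^\actions$ (an \actions-indexed product) preservation is inherited from the factors, since products of weak pullbacks are again weak pullbacks; likewise $\powerset(G)$ preserves weak pullbacks whenever $G$ does, by composing the two preservation properties. (Alternatively, \rem{subbisimulation} permits a direct structural induction on the witness maps for the product and exponential cases, but the powerset case still needs the argument below.)

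The main obstacle is the finite powerset case $F=\powerset$. Here, for a span $X\xrightarrow{\alpha}Z\xleftarrow{\beta}Y$ with pullback $P'=\{(x,y)\mid \alpha(x)=\beta(y)\}$, I am given $U\subseteq X$ and $V\subseteq Y$ with $\alpha(U)=\beta(V)$ in $Z$, and I would exhibit the required preimage as $W=(U\times V)\cap P'$: the hypothesis $\alpha(U)=\beta(V)$ is exactly what guarantees that $W$ projects onto $U$ and onto $V$, while finiteness of $U$ and $V$ keeps $W$ finite. This is the one step that is not purely formal, and it is the reason the powerset functor must be restricted to finite subsets. Assembling the five cases completes the induction and, with it, the proof.
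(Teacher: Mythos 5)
Your proposal is correct and takes essentially the same route as the paper: the paper's entire proof consists of citing exactly the two facts you establish---that the composite of bisimulations is a bisimulation whenever $F$ preserves weak pullbacks (attributed to Rutten), and that $\ndf$ functors preserve weak pullbacks (attributed to Sokolova)---so your argument simply supplies in full the proofs that the paper delegates to the literature, and both of your ingredients (the lifting of the coalgebra structure to the pullback $P$, and the structural induction with the explicit witness $W=(U\times V)\cap P'$ in the powerset case) check out. One side remark is slightly off: the unrestricted powerset functor also preserves weak pullbacks, so finiteness is not ``the reason'' for the restriction to $\powerset$---it is needed in your argument only so that $W$ lands in $\powerset(P')$---but this does not affect the validity of the proof.
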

\begin{proof}
  This lemma follows immediately from the following two facts:
  1)~comosition of bisimulations on $F$-coalgebras is a bisimulation
  when $F$ preserves weak pullbacks \cite{Rutten00}; 2)~$\ndf$
  functors preserve weak pullbacks \cite{Sokolova-PhD}.
\end{proof}

Although \thm{bisim:union} allows us to speak of the maximal
bisimulation on two coalgebras, the witness mapping on this maximal
bisimulation need not be unique.  The following proposition provides a
construction of a canonical witness mapping on a bisimulation of two
coalgebras.

\begin{proposition}
  Consider a bisimulation $R \subseteq S_1\times S_2$ on two
  $F$-coalgebras $(S_1,f_1)$ and $(S_2,f_2)$ with $F \in \ndf$.  Let
  $g_1,g_2:R \rightarrow F(R)$ be two witness mappings.  The mapping
  $g_1 \maxwit{F} g_2: R\rightarrow F(R)$ defined below is again a
  witness for bisimulation $R$.
  \begin{eqnarray}
    \label{eq:max:witness}
    g_1 \maxwit{F} g_2 & \bydef{=} &
    \begin{cases}
      g_1 = g_2, 
      & \text{if }F = \id \text{ or } F = \bB\\
      (g_1^1 \maxwit{F_1} g_2^1)\times (g_1^2 \maxwit{F_2} g_2^2)
      & \text{if }F = F_1 \times F_2,\quad g_i = g_i^1 \times g_i^2\ (i =1,2)\\
      \lambda a.\Big(g_1(a) \maxwit{G} g_2(a)\Big), 
      & \text{if }F = G^\actions\\
      g_1 \cup g_2,
      & \text{if }F = \powerset(G),
    \end{cases}
  \end{eqnarray}
  with $g_1 \cup g_2: (s_1,s_2) \mapsto g_1(s_1,s_2) \cup g_2(s_1,s_2)$.
\end{proposition}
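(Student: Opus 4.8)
The plan is to proceed by structural induction on $F \in \ndf$, mirroring exactly the case split used in the definition \eq{max:witness}. Recall that a mapping $g : R \rightarrow F(R)$ is a witness for $R$ precisely when the two squares of \eq{bisimulation} commute, that is, when $F(\pi_i) \circ g = f_i \circ \pi_i$ for both $i = 1, 2$. So in each case I must verify these two equations for $g_1 \maxwit{F} g_2$, under the hypothesis that $g_1$ and $g_2$ already satisfy them.

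For the base cases $F = \id$ and $F = \bB$, I would first observe that a witness is in fact \emph{unique}, which is exactly what the clause ``$g_1 = g_2$'' in \eq{max:witness} records. When $F = \id$ we have $F(\pi_i) = \pi_i$, so the witness condition forces $g(s_1, s_2) = \big(f_1(s_1), f_2(s_2)\big)$; when $F = \bB$ we have $F(\pi_i) = id_{\bB}$ by the second column of \tab{ndf}, so it forces $g(s_1, s_2) = f_i\big(\pi_i(s_1, s_2)\big)$. In either case $g_1$ and $g_2$ coincide, and their common value $g_1 \maxwit{F} g_2$ is trivially a witness.

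The two functorial cases are the heart of the induction and rely on \rem{subbisimulation}. For $F = F_1 \times F_2$, decomposing $g_k$ and $f_i$ into their product components as in that remark, I get that each component $g_k^j$ is a witness for the sub-bisimulation $R$ on the $F_j$-coalgebras $(S_1, f_1^j)$ and $(S_2, f_2^j)$. The inductive hypothesis then yields that $g_1^j \maxwit{F_j} g_2^j$ is a witness for that sub-bisimulation, and since $F(\pi_i) = F_1(\pi_i) \times F_2(\pi_i)$ acts componentwise, recombining the two factorwise witness equations gives the witness equation for $g_1 \maxwit{F} g_2 = (g_1^1 \maxwit{F_1} g_2^1) \times (g_1^2 \maxwit{F_2} g_2^2)$. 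The case $F = G^\actions$ is strictly analogous via the right-hand diagram of \eq{subbisimulation}: for each fixed $a \in \actions$ the maps $g_k(a)$ are witnesses for the $G$-bisimulation $R$ on $(S_1, f_1(a))$ and $(S_2, f_2(a))$, so by induction $g_1(a) \maxwit{G} g_2(a)$ is a witness, and since $F(\pi_i) = G(\pi_i)^\actions$ acts pointwise in $a$, the collection $\lambda a.\big(g_1(a) \maxwit{G} g_2(a)\big)$ is a witness for $F$.

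The only case that genuinely merges the two witnesses is $F = \powerset(G)$, where $g_1 \maxwit{F} g_2 = g_1 \cup g_2$; here I would argue directly rather than by induction. Since $F(\pi_i) = \powerset(G(\pi_i))$ sends a finite set $U$ to $\{G(\pi_i)(u) \mid u \in U\}$, it distributes over union, so $F(\pi_i)\big((g_1 \cup g_2)(r)\big) = F(\pi_i)\big(g_1(r)\big) \cup F(\pi_i)\big(g_2(r)\big)$; as $g_1$ and $g_2$ are both witnesses, each term equals $f_i(\pi_i(r))$, and their union is again $f_i(\pi_i(r))$, giving the required equation (finiteness of $g_1(r) \cup g_2(r)$ keeps us inside $\powerset(G(R))$). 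I expect the main obstacle to be purely the bookkeeping in the product and exponentiation cases: one must carefully align the recursive decomposition of $g_1 \maxwit{F} g_2$ from \eq{max:witness} with the sub-bisimulation structure of \rem{subbisimulation}, so that the inductive hypothesis is applied to the correct coalgebras $(S_1, f_1^j), (S_2, f_2^j)$ and $(S_1, f_1(a)), (S_2, f_2(a))$; the base and powerset cases, by contrast, are essentially immediate.
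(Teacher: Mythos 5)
Your proposal is correct and follows essentially the same route as the paper's own proof: well-definedness of the base cases $F = \id$ and $F = \bB$ via uniqueness of the witness (using $\id(\pi_i) = \pi_i$ and $\bB(\pi_i) = id_{\bB}$), structural induction through \rem{subbisimulation} for the cases $F = F_1 \times F_2$ and $F = G^\actions$, and a direct computation for $F = \powerset(G)$ using the fact that $\powerset(G)(\pi_i)$ distributes over union so that $\powerset(G)(\pi_i) \circ (g_1 \cup g_2) = (f_i \circ \pi_i) \cup (f_i \circ \pi_i) = f_i \circ \pi_i$. Your write-up is in fact slightly more explicit than the paper's (which dispatches the product and exponentiation cases with a one-line appeal to the remark), but there is no substantive difference in the argument.
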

\begin{proof}
  To prove the proposition, we have to show that $\maxwit{F}$ is well
  defined, \ie that $g_1 = g_2$ for either $F=\id$ or $F=\bB$, and
  that projections $\pi_i: R \rightarrow S_i$ ($i=1,2$) are coalgebra
  homomorphisms from $(R,g_1 \maxwit{F} g_2)$ to $(S_i, f_i)$.

  For $F = \id$, we have, $g_1 = (\pi_1 \circ g_1, \pi_2 \circ g_1) =
  (f_1 \circ \pi_1, f_2 \circ \pi_2) = (\pi_1 \circ g_2, \pi_2 \circ
  g_2) = g_2$.
  
  For $F = \bB$, we have $F(\pi_1) = id_{\bB}$ and $g_1 = id_{\bB}
  \circ g_1 = f_1 \circ \pi_1 = id_{\bB} \circ g_2 = g_2$.

  The proof that $\pi_i$ are coalgebra homomorphisms is by structural
  induction on $F$.  It is trivial for the cases $F = \id$ and $F =
  \bB$.  For the cases $F = F_1 \times F_2$ and $F = G^\actions$, it
  follows immediately from \rem{subbisimulation}.  Finally, for $F =
  \powerset(G)$, we have $\powerset(G)(\pi_i) \circ (g_1 \cup g_2) =
  (\powerset(G)(\pi_i) \circ g_1) \cup (\powerset(G)(\pi_i) \circ g_2)
  = (f_i \circ \pi_i) \cup (f_i \circ \pi_i) = (f_i \circ \pi_i)$.
\end{proof}

We define the semantic preorder on $F$-coalgebras with $F \in \ndf$ by
putting $(S_1,f_1) \lessint (S_2,f_2)$ iff they have a bisimulation
total on $S_1$.  The meet operator is defined by putting $(S_1,f_1)
\meet (S_2,f_2) \bydef{=} (R, g)$, where $R$ is their maximal
bisimulation and $g: R \rightarrow F(R)$ is the witness mapping
maximal with respect to $\maxwit{F}$ defined by \eq{max:witness}.
Observe that $\maxwit{F}$ can be applied point-wise.  Furthermore, for
each $(s_1,s_2)\in S_1 \times S_2$, there is only a finite number of
possible values for the bisimulation witness mapping.  Hence, $(R,g)$
is defined uniquely.

\begin{proposition}
  $(\factor{\mathbf{Set}_{\ndf}}, \lessint, \meet)$ is a
  meet-semilattice, where $\mathbf{Set}_{\ndf}$ is the category of
  $F$-coalgebra with $F\in \ndf$.
\end{proposition}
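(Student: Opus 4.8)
The plan is to verify that $\meet$ computes greatest lower bounds for $\lessint$; since a greatest lower bound in a preorder descends to a genuine meet on the quotient by $\simeq$, and antisymmetry on $\factor{\mathbf{Set}_{\ndf}}$ is automatic, this is all the semilattice claim requires. Before that I would record that $\lessint$ is indeed a preorder: reflexivity holds because the diagonal $\{(s,s)\mid s\in S\}$, being the image of $\langle\mathrm{id},\mathrm{id}\rangle$, is a bisimulation by \lem{bisim:img} and is total on $S$; transitivity follows from \lem{bisim:comp} together with the observation that a composite of two relations, each total on its source, is again total. Throughout, write $B_1\meet B_2=(R,g)$ with $R$ the maximal bisimulation on $(S_1,f_1)$ and $(S_2,f_2)$ and $g$ the maximal witness.

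For the lower-bound half I would exploit that, $g$ being a witness for $R$, the projections $\pi_i\colon(R,g)\to(S_i,f_i)$ are coalgebra homomorphisms. Applying \lem{bisim:img} to the pair of homomorphisms $\mathrm{id}_R$ and $\pi_i$ out of $(R,g)$ shows that the graph $G_i=\{(r,\pi_i(r))\mid r\in R\}$ is a bisimulation between $(R,g)$ and $(S_i,f_i)$. It is visibly total on $R$, so $B_1\meet B_2\lessint B_i$ for $i=1,2$, and $\meet$ is a lower bound.

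The greatest-lower-bound half is the substance of the proof, and I would handle it by composing bisimulations rather than by building a witness from $\maxwit{F}$ by hand. Suppose $\tilde B=(\tilde S,\tilde f)$ satisfies $\tilde B\lessint B_1$ and $\tilde B\lessint B_2$, via total bisimulations $R_1\subseteq\tilde S\times S_1$ and $R_2\subseteq\tilde S\times S_2$. Using that the inverse of a bisimulation is again a bisimulation, the composite $R_1\circ G_1^{-1}\subseteq\tilde S\times R$ is a bisimulation by \lem{bisim:comp}; explicitly it consists of the pairs $(\tilde s,(s_1,s_2))$ with $\tilde s\,R_1\,s_1$ and $(s_1,s_2)\in R$. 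If this relation is total on $\tilde S$ it witnesses $\tilde B\lessint B_1\meet B_2$, and the proof is complete.

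The step I expect to be the real obstacle is establishing that totality. Given $\tilde s$, totality of $R_1$ and $R_2$ yields $s_1,s_2$ with $\tilde s\,R_1\,s_1$ and $\tilde s\,R_2\,s_2$, but I must still know $(s_1,s_2)\in R$ before $(\tilde s,(s_1,s_2))$ lies in $R_1\circ G_1^{-1}$. I would obtain this from the maximality of $R$: the composite $R_1^{-1}\circ R_2\subseteq S_1\times S_2$ is a bisimulation (again by inverses and \lem{bisim:comp}) and contains $(s_1,s_2)$, so by closure of bisimulations under unions (\thm{bisim:union}) it is contained in the maximal bisimulation $R$, giving $(s_1,s_2)\in R$. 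This is exactly the point where the choice of $R$ as the \emph{maximal} bisimulation is indispensable; once it is settled, $R_1\circ G_1^{-1}$ is total on $\tilde S$ and $\meet$ is confirmed to be the meet, with no need to manipulate $g$ or $\maxwit{F}$ directly.
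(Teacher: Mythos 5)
Your proposal is correct and takes essentially the same route as the paper: your composite $R_1 \circ G_1^{-1}$ is literally the paper's bisimulation $R_1 \circ (\pi, \mathrm{id}_R)(R)$, built from the same three ingredients (\lem{bisim:img}, \lem{bisim:comp}, and maximality of $R$ via \thm{bisim:union}). The only difference is that you spell out what the paper leaves terse or implicit --- reflexivity, the lower-bound half, and the maximality argument that makes the composite total on $\tilde{S}$ --- so this is an expansion of the paper's proof rather than a genuinely different one.
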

\begin{proof}
  First of all, \lem{bisim:comp} implies the transitivity of
  $\lessint$.  To complete the proof, we must show that, in the above
  context, $(R,g)$ is, indeed, the meet of $(S_1,f_1)$ and
  $(S_2,f_2)$.

  Let $(S,f)$ be an $F$-coalgebra such that $(S,f) \lessint
  (S_i,f_i)$, for $i=1,2$.  Then there exist two corresponding
  coalgebras $(R_i,g_i)$ such that each $R_i \subseteq S \times S_i$
  is a bisimulation on $(S,f)$ and $(S_i,f_i)$ total on $S$.  Since
  $R_1$ and $R_2$ are total on $S$, and since $R$ is maximal, $R$ is
  total on the image $R_1(S) \subseteq S_1$.

  Since both $id_R$ and the projection $\pi: R \rightarrow S_1$ are
  coalgebra homomorphisms (the latter by definition of bisimulation),
  \lem{bisim:img} implies that the image $(\pi, id_R)(R)$ is a
  bisimulation on $(S_1,f_1)$ and $(R,g)$.  Moreover, it is total on
  $R_1(S)$.  We conclude, by observing that, by \lem{bisim:comp}, $R_1
  \circ (\pi, id_R)$ is a bisimulation on $(S,f)$ and $(R,g)$, total
  on $S$, \ie $(S,f) \lessint (R,g)$.\footnote{
    Observe that the maximality of $g$ is only used to prove the
    uniqueness of $(R,g)$.  Thus, $(S,f) \lessint (R,g')$, for any
    witness mapping $g': R \rightarrow F(R)$.
}
\end{proof}

In the general case, the subject of coalgebra composition has been
considered, for example, in \cite{Barbosa00, Sokolova08-microcosm}.
Here we provide, for a functor $F \in \ndf$, an example of the
$F$-coalgebra composition operator along the lines of
\cite{Sokolova08-microcosm}.  We define the maximal interaction
operator $\parallel$ by putting, for any two $F$-coalgebras $(S_1,
f_1)$ and $(S_2, f_2)$,
\begin{eqnarray}
  \label{eq:coalgebra:parallel}
  (S_1, f_1) \parallel (S_2,f_2) \bydef{=} (S_1 \times S_2,f_1 \parallel f_2)
  \,,
\end{eqnarray}
with
\begin{equation}
  \label{eq:coalgebra:sync}
  \begin{split}
    (f_1 \parallel f_2)&: S_1 \times S_2 \rightarrow F(S_1 \times S_2)\,\\
    (s_1, s_2) &\mapsto f_1(s_1) \sync{F}{S_1,S_2} f_2(s_2)
  \end{split}
  \ ,
\end{equation}
where $\sync{F}{X,Y} : F(X) \times F(Y) \rightarrow F(X \times Y)$ is
defined by structural induction on $F$.  For any $x \in F(X)$ and $y \in
F(Y)$, we put

\begin{align}
  \label{eq:sync}
  x \sync{F}{X,Y} y & \bydef{=}
  \begin{cases}
    (x,y), & \text{if $F = \id$,}
    \\[5pt]
    x \lor y, & \text{if $F = \bB$,}
    \\[5pt]
    \left(x_1 \sync{F_1}{X,Y} y_1,\ x_2 \sync{F_2}{X,Y} y_2\right), &
    \text{if  
      $F = F_1 \times F_2$, $x = (x_1,x_2)$ and $y = (y_1,y_2)$,}
    \\[5pt]
    %% \kappa_i\left(x' \sync{F_i}{X,Y} y'\right), & 
    %% \text{if 
    %%   $F = F_1 + F_2$, $x = \kappa_i(x')$ and 
    %%   $y = \kappa_i(y')$, for some $i = 1,2$,}
    %% \\[5pt]
    %% x, & \text{if $F = F_1 + F_2$ and $y = \bot$,}
    %% \\[5pt]
    %% y, & \text{if $F = F_1 + F_2$ and $x = \bot$,}
    %% \\[5pt]
    %% \top, & \text{if $F = F_1 + F_2$, any other case,}
    %% \\[5pt]
    \lambda a. \left(x(a) \sync{G}{X,Y} y(a)\right), & 
    \text{if $F = G^\actions$.}
    \\[5pt]
    \left\{
    \left. x' \sync{G}{X,Y} y' \,\right|\, x' \in x, y' \in y
    \right\}, &
    \text{if $F = \powerset(G)$.}
  \end{cases}
\end{align}

As it has been observed in \cite{Sokolova08-microcosm}, the composition
operator defined by \eq{coalgebra:parallel} and \eq{coalgebra:sync} is
well-behaved, provided that $\sync{F}{}$ is a natural transformation of
functors from $F \times F$ to $F$.  In particular, this would guarantee
that this composition operator preserves coalgebra homomorphisms and,
consequently the semantic preorder defined above.

\begin{proposition}
  $\sync{F}{}$ is a natural transformation from $F \times F :
  \mathbf{Set}^2 \rightarrow \mathbf{Set}$ to $F: \mathbf{Set}^2
  \rightarrow \mathbf{Set}$.
\end{proposition}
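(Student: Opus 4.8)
The plan is to unfold what naturality means and then argue by structural induction on $F$, mirroring line-by-line the inductive definition of $\sync{F}{}$ in \eq{sync} together with the action of $F$ on morphisms recorded in \tab{ndf}. A morphism in $\mathbf{Set}^2$ from $(X_1, X_2)$ to $(Y_1, Y_2)$ is a pair $(f_1, f_2)$ with $f_i : X_i \rightarrow Y_i$; the functor $F \times F$ sends it to $F(f_1) \times F(f_2)$, whereas the functor $F : \mathbf{Set}^2 \rightarrow \mathbf{Set}$ is $F$ precomposed with the Cartesian product, sending it to $F(f_1 \times f_2)$. Writing $h = f_1 \times f_2$ for brevity, naturality of $\sync{F}{}$ is exactly the identity
\[
  F(h)\big(x \sync{F}{X_1, X_2} y\big) = \big(F(f_1)(x)\big) \sync{F}{Y_1, Y_2} \big(F(f_2)(y)\big)
\]
required for all $x \in F(X_1)$ and $y \in F(X_2)$, and this is what I would establish for every $F \in \ndf$.

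First I would dispatch the two base cases by direct computation. For $F = \id$ both sides evaluate to $(f_1(x), f_2(y))$, since $F(h) = h = f_1 \times f_2$ and $x \sync{\id}{X_1,X_2} y = (x,y)$. For $F = \bB$ the functor acts as $id_{\bB}$ on every morphism, so both sides collapse to $x \lor y$; note that the join-semilattice structure of $\bB$ is irrelevant here beyond $x \lor y$ being a fixed element of $\bB$ transported by the identity.

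Next come the three inductive cases, each of which should reduce to the induction hypothesis on the immediate subfunctor(s) after unfolding. For $F = F_1 \times F_2$, writing $x = (x_1,x_2)$ and $y = (y_1,y_2)$, both $\sync{F}{}$ and $F(h) = F_1(h) \times F_2(h)$ act componentwise, so the statement for $F$ is just the conjunction of the statements for $F_1$ and $F_2$. For $F = G^{\actions}$, both $\sync{F}{}$ and $F(h) = G(h)^{\actions}$ (postcomposition by $G(h)$) act pointwise in $a \in \actions$, so the claim reduces to naturality of $\sync{G}{}$ applied to $x(a)$ and $y(a)$. For $F = \powerset(G)$, the left-hand side is the set of all $G(h)(x' \sync{G}{X_1,X_2} y')$ with $x' \in x$, $y' \in y$, while the right-hand side is the set of all $(G(f_1)(x')) \sync{G}{Y_1,Y_2} (G(f_2)(y'))$ over the same index set; these agree termwise by the induction hypothesis for $G$, and hence as finite sets.

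I do not anticipate a genuine obstacle: the proof is a routine structural induction whose shape is dictated by \eq{sync}. The only care needed is bookkeeping — keeping the functor $F \times F$ distinct from the composite $F \circ \times$, tracking variance (all $\ndf$ functors are covariant, so every naturality square points the same way and no contravariant twist arises), and, in the powerset case, confirming that forming the $G$-image commutes with taking pairwise synchronisations. Since these are precisely the spots where a careless argument could silently confuse a product with a composite, I would write the naturality square out explicitly before beginning the induction and then verify each clause against the matching line of \eq{sync}.
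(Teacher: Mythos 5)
Your proof is correct and takes essentially the same approach as the paper's: the naturality identity you isolate, $F(h_1 \times h_2)\left(x \sync{F}{X,Y} y\right) = F(h_1)(x) \sync{F}{X',Y'} F(h_2)(y)$, is exactly the equation the paper verifies, by the same structural induction with the same five cases (identity, $\bB$, product, exponentiation, finite powerset) resolved in the same way. The paper merely writes out in full the componentwise, pointwise and elementwise computations that you summarise.
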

\begin{proof}
  We prove, by structural induction on $F$ that, for any morphism $h = (h_1
  \times h_2) : X \times Y \rightarrow X' \times Y'$, the diagram
  \[
    \begin{CD}
      F(X) \times F(Y) @>{\sync{F}{X,Y}}>> F(X \times Y)\\
      @V{(F\times F)(h)}VV                 @VV{F(h)}V\\
      F(X') \times F(Y') @>>{\sync{F}{X',Y'}}> F(X' \times Y')
    \end{CD}
  \]
  is commutative, \ie that, for any $x \in F(X)$ and $y \in F(Y)$, holds
  \begin{eqnarray}
    F(h_1 \times h_2)\left(x \sync{F}{X,Y} y\right)
    & = & 
    F(h_1)(x) \sync{F}{X',Y'} F(h_2)(y)\,.
  \end{eqnarray}

  {\bf Case 1.} ($F = \id$)
  \[
  \id(h_1 \times h_2)\left(x \sync{\id}{X,Y} y\right) = 
  (h_1 \times h_2)(x,y) = 
  (h_1(x),h_2(y)) = 
  \id(h_1)(x) \sync{\id}{X',Y'} \id(h_2)(y)\,.
  \]

  {\bf Case 2.} ($F = \bB$)
  \[
  \bB(h_1 \times h_2)\left(x \sync{\bB}{X,Y} y\right) = 
  id_{\bB}(x \lor y) = 
  x \lor y = 
  id_{\bB}(x) \lor id_{\bB}(y) = 
  \bB(h_1)(x) \sync{\bB}{X',Y'} \bB(h_2)(y)\,.
  \]

  {\bf Case 3.} ($F = F_1 \times F_2$) Let $x = (x_1, x_2)$ and $y =
  (y_1,y_2)$.
  \begin{align*}
    (F_1 &\times F_2)(h_1 \times h_2)
    \left((x_1,x_2) \sync{F_1 \times F_2}{X,Y} (y_1,y_2)\right) =\\
    &= \left(F_1(h_1 \times h_2) \times F_2(h_1 \times h_2)\right)
    \left(x_1 \sync{F_1}{X,Y} y_1,\ x_2 \sync{F_2}{X,Y} y_2\right)
    &&\text{by \eq{sync} and the definition in \tab{ndf}}\\
    &= \left(F_1(h_1 \times h_2)\left(x_1 \sync{F_1}{X,Y} y_1\right),\
    F_2(h_1 \times h_2)\left(x_2 \sync{F_2}{X,Y} y_2\right)\right)
    &&\text{by the definition in \tab{operations}}\\
    &= \left(F_1(h_1)(x_1) \sync{F_1}{X',Y'} F_1(h_2)(y_1),\
    F_2(h_1)(x_2) \sync{F_2}{X',Y'} F_2(h_2)(y_2)\right)
    &&\text{by the induction hypothesis}\\
    &= \Big(F_1(h_1)(x_1), F_2(h_1)(x_2)\Big) \sync{F_1 \times F_2}{X',Y'}
    \Big(F_1(h_2)(y_1), F_2(h_2)(y_2)\Big)
    &&\text{by \eq{sync}}\\
    &= (F_1 \times F_2)(h_1)(x_1,x_2) \sync{F_1 \times F_2}{X',Y'}
    (F_1 \times F_2)(h_2)(y_1,y_2)
    &&\text{by the definition in \tab{operations}.}
  \end{align*}

  {\bf Case 4.} ($F = G^\actions$)
  \begin{align*}
    G^\actions(h_1 &\times h_2)\left(x \sync{G^\actions}{X,Y} y\right) = \\
    &= G(h_1 \times h_2) \circ 
    \lambda a. \left(x(a) \sync{G}{X,Y} y(a)\right)
    &&\text{by \eq{sync} and the definitions in 
      Tables~\ref{tab:operations} and \ref{tab:ndf}}\\
    &= \lambda a. \left(G(h_1 \times h_2) 
    \left(x(a) \sync{G}{X,Y} y(a)\right)\right)
    &&\text{by the definition of function composition}\\
    &= \lambda a. \left(G(h_1)(x(a)) \sync{G}{X',Y'} G(h_2)(y(a))\right)
    &&\text{by the induction hypothesis}\\
    &= \Big(\lambda a. G(h_1)(x(a))\Big) \sync{G^\actions}{X',Y'} 
    \Big(\lambda a. G(h_2)(y(a))\Big)
    &&\text{by \eq{sync}}\\
    &= G^\actions(h_1)(x) \sync{G^\actions}{X',Y'} G^\actions(h_2)(y)
    &&\text{by the definitions in Tables~\ref{tab:operations} and 
      \ref{tab:ndf}.}
  \end{align*}

  {\bf Case 5.} ($F = \powerset(G)$)
  \begin{align*}
    \powerset(&G)(h_1 \times h_2)
    \left(x \sync{\powerset(G)}{X,Y} y\right) = \\
    &= \left\{ G(h_1 \times h_2)\left(x' \sync{G}{X,Y} y'\right)
    \left|\begin{array}{l} x' \in x\\y' \in y \end{array}\right.\right\}
    &&\text{by \eq{sync} and the definitions in 
      Tables~\ref{tab:operations} and \ref{tab:ndf}}\\
    &= \left\{ G(h_1)(x') \sync{G}{X',Y'} G(h_2)(y')
    \left|\begin{array}{l} x' \in x\\y' \in y \end{array}\right.\right\}
    &&\text{by induction hypothesis}\\
    &= \left\{ x'' \sync{G}{X',Y'} y'' \left|
    \begin{array}{l} 
      x'' \in \{G(h_1)(x')\,|\,x'\in x\}\\
      y'' \in \{G(h_2)(y')\,|\,y'\in y\}
    \end{array}
    \right.\right\}
    \\
    &= \left\{x'' \sync{G}{X',Y'} y'' \left|
    \begin{array}{l}
      x'' \in \powerset(G(h_1))(x)\\
      y'' \in \powerset(G(h_2))(y) 
    \end{array}
    \right.\right\}
    &&\text{by the definition in \tab{operations}}\\
    & = \powerset(G)(h_1)(x) \sync{\powerset(G)}{X',Y'} \powerset(G)(h_2)(y)
    &&\text{by \eq{sync} and the definition in \tab{ndf}.}  
  \end{align*}
\end{proof}

Finally, we put $\zerobeh_F = (1, f^0_F)$, with $f^0_F: 1 \rightarrow F(1)$
defined, once again, by structural induction on $F$:
\begin{eqnarray*}
  f^0_F(*) & \bydef{=} &
  \begin{cases}
    *, & \text{if $F = \id$,}\\
    \bot, & \text{if $F = \bB$,}\\
    (f^0_{F_1}(*), f^0_{F_2}(*)), & \text{if $F = F_1 \times F_2$,}\\
    \lambda a. f^0_G(*), & \text{if $F = G^\actions$,}\\
    \{f^0_G(*)\}, & \text{if $F = \powerset(G)$,}
  \end{cases}
\end{eqnarray*}
where $\bot \in \bB$ is the bottom of $\bB$.

Clearly, taken together, the elements defined in this section form a
behaviour type over the family of $F$-coalgebras.

%% \begin{note}
%%   \label{rem:restriction}
%%   The restriction, imposed on the considered class of non-deterministic
%%   functors by excluding the disjoint union ($F + F$) from the grammar
%%   \eq{ndf}, effectively excludes from consideration the class of
%%   deterministic automata defined by the functor $D = (1 +
%%   \id)^\actions$.  Notice, however, that this restriction is naturally
%%   corresponds to the choice of the synchronising transformation
%%   $\sync{F}{}$.  Indeed, this latter imposes strong synchronisation
%%   (see case $F = G^\actions$ in \eq{sync}), which requires all
%%   behaviours to be input-enabled.
%%   A more sophisticated synchronising transformation is required to
%%   construct a behaviour type based on a more general class of coalgebras.
%% \end{note}

% **************************************************************************
% **************************************************************************
\section{Behaviour composition}
\label{sec:composition}

% **************************************************************************
\subsection{Composition operators}
\label{sec:operators}

Assume that a behaviour type $(\cB, \parallel, \issimulatedby, \lessint,
\meet, \zerobeh)$ is given.

\begin{definition}[Composition operator]
  \label{defn:comp}
  An $n$-ary operator $f : \cB^n \rightarrow \cB$ is a \emph{composition
    operator} iff it satisfies the following properties, for any
  $B_1,\dots,B_n,\tilde{B} \in \cB$:
  \begin{enumerate}
  \item \label{comp:simulated} $f(B_1,\dots,B_n) \issimulatedby B_1
    \parallel \dots \parallel B_n$,
  \item \label{comp:preserves} For any $i \in [1,n]$, $B_i \lessint
    \tilde{B}$ implies $f(B_1, \dots, B_i, \dots, B_n) \lessint f(B_1,
    \dots, \tilde{B}, \dots, B_n)$.
  \end{enumerate}
  
  We denote by $\comp$ the set of all composition operators.  $\comp =
  \bigcup_{n\geq 1} \comp[(n)]$, where $\comp[(n)]$ is the set of all $n$-ary
  composition operators.
\end{definition}

Among the immediate consequences of the above definition, one should notice
the following facts.

\begin{lemma}
  \begin{enumerate}
  \item The equivalence relation $\simeq\ =\ \lessint \cap \lessint^{-1}$
    is a congruence with respect to composition operators;
  \item The maximal interaction operator $\parallel$ is a composition
    operator;
  \item For any $B_1,B_2 \in \cB$, one has $B_1 \issimulatedby B_1
    \parallel B_2$.
  \end{enumerate}
\end{lemma}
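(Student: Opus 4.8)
The plan is to treat the three claims separately, each reducing to the axioms of \defn{behaviour} and \defn{comp}. I would present the third claim first, since it is self-contained, then the first, and finally the second, which carries the only real subtlety.

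For the third claim I would use that $\zerobeh$ is the neutral element of the monoid, so $B_1 \parallel \zerobeh = B_1$. Axiom~\ref{parallel-bottom} gives $\zerobeh \issimulatedby B_2$, and then preservation of $\issimulatedby$ by $\parallel$ (axiom~\ref{parallel-simulation}), applied with first argument $B_1$ and with $\zerobeh \issimulatedby B_2$, yields $B_1 \parallel \zerobeh \issimulatedby B_1 \parallel B_2$. Rewriting the left-hand side by neutrality gives $B_1 \issimulatedby B_1 \parallel B_2$, as required. No commutativity is needed here, because the right-neutrality $B_1 \parallel \zerobeh = B_1$ matches the second-operand monotonicity supplied by axiom~\ref{parallel-simulation}.

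For the first claim I would begin by noting that $\simeq\,=\,\lessint \cap \lessint^{-1}$ is genuinely an equivalence relation: reflexivity and transitivity follow from those of the preorder $\lessint$, and symmetry is immediate from the definition. To establish congruence with respect to an arbitrary $f \in \comp[(n)]$, I would argue coordinate by coordinate. Fixing a position $i$ and supposing $B_i \simeq \tilde{B}$, \ie $B_i \lessint \tilde{B}$ and $\tilde{B} \lessint B_i$, property~\ref{comp:preserves} of \defn{comp} applied in each direction yields $f(\dots,B_i,\dots) \lessint f(\dots,\tilde{B},\dots)$ together with its reverse, hence $f(\dots,B_i,\dots) \simeq f(\dots,\tilde{B},\dots)$. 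Substituting the arguments one at a time and chaining the resulting equivalences via transitivity of $\simeq$ gives $f(B_1,\dots,B_n) \simeq f(\tilde{B}_1,\dots,\tilde{B}_n)$ whenever $B_j \simeq \tilde{B}_j$ for all $j$.

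For the second claim I would read $\parallel$ as a binary operator and verify the two requirements of \defn{comp}. Requirement~\ref{comp:simulated} reads $B_1 \parallel B_2 \issimulatedby B_1 \parallel B_2$, which is just reflexivity of $\issimulatedby$. For requirement~\ref{comp:preserves}, preservation of $\lessint$ in the right operand is exactly axiom~\ref{parallel-semantic}. The step I expect to be the main obstacle is preservation in the left operand, namely $B_1 \lessint \tilde{B} \Rightarrow B_1 \parallel B_2 \lessint \tilde{B} \parallel B_2$: the axioms of \defn{behaviour} stipulate monotonicity of $\parallel$ only in its second argument, so there is no direct left-hand counterpart. To recover it I would invoke commutativity of the monoid, which holds in each worked example (traces, LTS, and coalgebras), transporting the left-hand case to the right-hand case already covered by axiom~\ref{parallel-semantic}. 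This is the point to flag carefully, since the abstract definition names $(\cB,\parallel,\zerobeh)$ a monoid rather than a commutative one; the claim as stated presupposes the commutative reading realised by all the examples.
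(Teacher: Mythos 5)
Your proof is correct, and on the only claim the paper actually proves (the third) it coincides with the paper's argument: $\zerobeh \issimulatedby B_2$ by condition \ref{parallel-bottom}, then condition \ref{parallel-simulation} together with neutrality of $\zerobeh$ gives $B_1 \simeq B_1 \parallel \zerobeh \issimulatedby B_1 \parallel B_2$. The paper dismisses the first two claims as ``trivial''; your argument for the first claim is indeed the routine coordinate-by-coordinate chaining (and it genuinely is routine, since property \ref{comp:preserves} of \defn{comp} is stated for every argument position), but your analysis of the second claim surfaces something the paper glosses over. As literally written, conditions \ref{parallel-simulation} and \ref{parallel-semantic} of \defn{behaviour} grant monotonicity of $\parallel$ only in its \emph{second} operand, whereas property \ref{comp:preserves} of \defn{comp} requires it in every argument, so ``$\parallel$ is a composition operator'' does not follow from the stated axioms alone. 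Your repair --- invoking commutativity to transport the left-operand case to the right-operand case covered by condition \ref{parallel-semantic} --- is the natural one and is consistent with the paper, which asserts that the Traces and LTS monoids are commutative (for coalgebras commutativity holds only up to isomorphism of state spaces, a wrinkle the paper also ignores); the alternative repair is to read ``preserved by $\parallel$'' as joint monotonicity in both coordinates, treating the displayed implication as shorthand. Either way some such emendation of \defn{behaviour} is needed, and flagging it explicitly is a strength of your write-up rather than a gap in your argument.
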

\begin{proof}
  The first two statements of the lemma are trivial.  The third one is
  proven by observing that $\zerobeh \issimulatedby B_2$ and,
  consequently, $B_1 \simeq B_1 \parallel \zerobeh \issimulatedby B_1
  \parallel B_2$.
\end{proof}

\begin{definition}[Composition of operators]
  \label{defn:comp:compose}
  For an $n$-ary operator $f_1 : \cB^n \rightarrow \cB$, an $m$-ary
  operator $f_2 : \cB^m \rightarrow \cB$, and $i \in [1,n]$, the
  $(n+m-1)$-ary operator $f_1 \circ_i f_2$ is defined by
  \begin{eqnarray}
    \label{eq:comp:compose}
    (f_1 \circ_i f_2)\Big(B_1,\dots,B_{n+m-1}\Big) & \bydef{=} &
    f_1\Big(
      B_1,\dots,B_{i-1}, 
      f_2(B_n,\dots,B_{n+m-1}), 
      B_i,\dots,B_{n-1}
    \Big).\quad
  \end{eqnarray}
\end{definition}

\begin{lemma}
  A composition of two composition operators is also a composition
  operator.
\end{lemma}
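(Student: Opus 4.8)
The plan is to fix composition operators $f_1 \in \comp[(n)]$ and $f_2 \in \comp[(m)]$ and an index $i \in [1,n]$, and to verify that $g \bydef{=} f_1 \circ_i f_2$, which is $(n+m-1)$-ary by \defn{comp:compose}, satisfies the two defining clauses of \defn{comp}. It will be convenient to abbreviate $C \bydef{=} f_2(B_n,\dots,B_{n+m-1})$, so that \eq{comp:compose} reads $g(B_1,\dots,B_{n+m-1}) = f_1(B_1,\dots,B_{i-1},C,B_i,\dots,B_{n-1})$.

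For the first clause I would chain the domination properties of the two operators. Applying that clause to $f_2$ gives $C \issimulatedby B_n \parallel \dots \parallel B_{n+m-1}$, and applying it to $f_1$ gives $g(B_1,\dots,B_{n+m-1}) \issimulatedby B_1 \parallel \dots \parallel B_{i-1} \parallel C \parallel B_i \parallel \dots \parallel B_{n-1}$. It then remains to dominate this second term by $B_1 \parallel \dots \parallel B_{n+m-1}$: using the preservation of $\issimulatedby$ by $\parallel$ from \defn{behaviour} I replace the single factor $C$ by its upper bound $B_n \parallel \dots \parallel B_{n+m-1}$, and then, invoking associativity and commutativity of $\parallel$, I reorder the factors so that those originally fed to $f_2$ appear last. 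Transitivity of $\issimulatedby$ closes the resulting chain.

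For the second clause I would argue by monotonicity, splitting on the position $k$ of the argument $B_k$ that is increased to some $\tilde{B}$ with $B_k \lessint \tilde{B}$. If $1 \le k \le n-1$, then $B_k$ is passed directly to $f_1$ and leaves $C$ untouched, so the monotonicity of $f_1$ in that slot yields the inequality at once. If $n \le k \le n+m-1$, then $B_k$ is an argument of $f_2$; writing $C'$ for the value of $f_2$ after the substitution, the monotonicity of $f_2$ gives $C \lessint C'$, and the monotonicity of $f_1$ in its $i$-th slot then lifts this to $g(B_1,\dots,B_{n+m-1}) \lessint g(B_1,\dots,\tilde{B},\dots,B_{n+m-1})$. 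In either case the required preservation of $\lessint$ follows.

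I expect the only genuinely delicate step to be the rearrangement in the first clause: substituting the upper bound of $C$ in a middle position and then permuting the factors into the canonical order $B_1 \parallel \dots \parallel B_{n+m-1}$ relies on $\parallel$ being commutative (and associative), as it is in all three examples of \secn{bt}. Everything else is a direct application of the two clauses of \defn{comp} together with the behaviour-type axioms of \defn{behaviour}.
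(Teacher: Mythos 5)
The paper states this lemma without any proof, so your proposal stands on its own merits. Its overall architecture is the natural one, and the clause-\ref{comp:preserves} part (the case split on which argument is increased, routing monotonicity through $f_2$ and then through the $i$-th slot of $f_1$) is fully correct and uses nothing beyond \defn{comp}. The problem is exactly the step you call delicate, and it is worse than delicate: \defn{behaviour} requires $(\cB,\parallel,\zerobeh)$ to be a monoid, \emph{not} a commutative monoid, and its preservation conditions are one-sided --- from $B_2 \issimulatedby B_3$ you may conclude $B_1 \parallel B_2 \issimulatedby B_1 \parallel B_3$, but not $B_2 \parallel B_1 \issimulatedby B_3 \parallel B_1$. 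Hence both of your moves in clause \ref{comp:simulated} --- upgrading the factor $C = f_2(B_n,\dots,B_{n+m-1})$ sitting in a middle position, and then permuting the factors into the canonical order $B_1 \parallel \dots \parallel B_{n+m-1}$ --- invoke a hypothesis (commutativity) that the lemma, as stated for arbitrary behaviour types, does not supply. What you have proved is the lemma for \emph{commutative} behaviour types, which does cover all three examples of \secn{bt}, but not the general statement.

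Moreover, this gap cannot be closed by a cleverer argument, because without commutativity the lemma is false. Countermodel: let $\cB$ be the free monoid over $\{a,b,c\}$ with $\parallel$ concatenation and $\zerobeh = \varepsilon$; let $\issimulatedby$ be the subsequence preorder (so $u \issimulatedby w$ iff $u$ is a scattered subword of $w$), which satisfies both conditions on $\issimulatedby$ in \defn{behaviour}; let $\lessint\ = \cB \times \cB$ be the total preorder and $B_1 \meet B_2 \bydef{=} \zerobeh$, so that $\factor{\cB}$ is a one-point meet-semilattice and all conditions on $\lessint$ hold trivially. Then $\parallel$ is a composition operator (clause \ref{comp:simulated} by reflexivity, clause \ref{comp:preserves} trivially), yet $(\parallel \circ_1 \parallel)(B_1,B_2,B_3) = (B_2 \parallel B_3) \parallel B_1$, and with $B_1 = a$, $B_2 = b$, $B_3 = c$ the word $bca$ is not a subsequence of $abc$, so clause \ref{comp:simulated} fails for the composite. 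The honest conclusion, which your review of the proof should make explicit, is that commutativity of $\parallel$ must be added to \defn{behaviour} (it holds in all the paper's examples), after which your argument is complete; note that the paper's preceding lemma, asserting that $\parallel$ itself is a composition operator, already requires the same strengthening, since clause \ref{comp:preserves} of \defn{comp} demands monotonicity of $\parallel$ in its \emph{first} argument, which condition \ref{parallel-semantic} of \defn{behaviour} alone does not give.
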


\begin{example}[BIP interaction model]
  \label{ex:sos}
  A convenient way of defining composition operators is through the
  use of SOS rules.  For example, consider the behaviours specified by
  LTS (see \secn{lts}).  In the BIP interaction model
  \cite{BliSif07-acp-emsoft}, given $\gamma \subseteq 2^\actions$, the
  corresponding $n$-ary composition operator is defined on behaviours
  $B_i = (Q_i, A_i, \goesto_i)$, for $i\in [1,n]$, by putting
  $\gamma(B_1,\dots,B_n) \bydef{=} (\prod_{i=1}^n Q_i, \bigcup_{i=1}^n
  A_i, \goesto)$, where $\goesto$ is the minimal transition relation
  satisfying the following set of SOS rules
  \begin{equation}
    \label{eq:positive}
    \left\{\left.
    \renewcommand{\arraystretch}{1.5}
    \derrule[3]{
      \Big\{q_i \goesto[a_i]_i q_i'\Big\}_{i \in I} &
      \Big\{q_i = q_i'\Big\}_{i \not\in I} &
      \bigcup_{i \in I} a_i = a
      %% \Big\{B_j: q_j \not\longgoesto[b^k_j]\Big\}_{%
      %%   \renewcommand{\arraystretch}{0.5}
      %%   \begin{array}{@{}r@{}c@{}l@{}}
      %%     \scriptscriptstyle j & \scriptscriptstyle \in & \scriptscriptstyle J\\
      %%     \scriptscriptstyle k & \scriptscriptstyle \in & \scriptscriptstyle K_j
      %%   \end{array}%
      %% }
    }{
      q_1 \dots q_n \longgoesto[a] q_1' \dots q_n'
    }
    \ \right|\,
    a \in \gamma
    \right\}\,.
  \end{equation}
\end{example}

\begin{proposition}
  Any operator $f$ defined as above is a composition operator on LTS.
\end{proposition}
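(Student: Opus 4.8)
The plan is to verify the two defining conditions of \defn{comp} for the operator $\gamma$ of \ex{sos}, working throughout with the concrete $LTS$ data of \secn{lts} and recalling that on $LTS$ the preorders $\issimulatedby$ and $\lessint$ coincide, both being witnessed by simulation relations satisfying \eq{lts:less}. Write $\gamma(B_1,\dots,B_n) = (\prod_i Q_i,\ \bigcup_i A_i,\ \goesto)$ and observe first that its action set $\bigcup_i A_i$ coincides with that of $B_1 \parallel \dots \parallel B_n$, so the alphabet inclusions demanded by \eq{lts:less} hold trivially at every step.

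For condition \ref{comp:simulated} I would take the identity relation on $\prod_i Q_i$ as the candidate simulation of $\gamma(B_1,\dots,B_n)$ by $B_1 \parallel \dots \parallel B_n$. The key observation is that every transition produced by the rule \eq{positive}---a subset $I$ of components firing $q_i \goesto[a_i]_i q_i'$ with $\bigcup_{i \in I} a_i = a \in \gamma$---is, once the side condition $a \in \gamma$ is dropped, exactly a transition of the $n$-ary parallel composition obtained from \eq{lts:parallel} by associativity, with the same label $a$ and the same target. Hence the identity matches each $\gamma$-move by an equally labelled $\parallel$-move, giving $\gamma(B_1,\dots,B_n) \issimulatedby B_1 \parallel \dots \parallel B_n$.

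For condition \ref{comp:preserves}, fix the argument $i$ to be replaced and suppose $B_i \lessint \tilde{B}$, witnessed by a total simulation $R_i \subseteq Q_i \times \tilde{Q}$ satisfying \eq{lts:less}. I would assemble the relation $R$ on the two product state spaces that acts as $R_i$ on the $i$-th coordinate and as the identity on the others, and check it is a total simulation from $\gamma(\dots,B_i,\dots)$ to $\gamma(\dots,\tilde{B},\dots)$. Totality is immediate from totality of $R_i$, so the real content is the matching of an arbitrary $\gamma$-move, which I would split according to whether the replaced component participates. When $i \not\in I$ the $i$-th coordinate is idle, the same instance of \eq{positive} fires verbatim in the second operator with the identical label $a$, and the targets are again $R$-related.

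The case $i \in I$ is where I expect the real difficulty. Here the component transition $q_i \goesto[a_i]_i q_i'$ must be matched, through $R_i$ and \eq{lts:less}, by some $\tilde{q} \goesto[b_i] \tilde{q}'$ with $a_i \subseteq b_i$; combining this with the unchanged contributions of the other components yields a composite label $c = b_i \cup \bigcup_{j \in I \setminus \{i\}} a_j \supseteq a$. The obstacle is that, to be a genuine transition of $\gamma(\dots,\tilde{B},\dots)$, this $c$ must itself belong to $\gamma$, whereas \eq{positive} only sanctions labels drawn from $\gamma$ while $R_i$ may force a strictly larger $b_i$. Producing a bona fide $\gamma$-labelled move $c \supseteq a$ of the second operator whose target is $R$-related to $(q_1',\dots,q_n')$ is therefore the crux, and it is precisely the step that must exploit the interplay between the interaction structure carried by $\gamma$ and the superset-matching built into the semantic preorder \eq{lts:less}.
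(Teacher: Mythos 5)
Your proof follows exactly the route of the paper's own proof: reduce everything to preservation of $\lessint$ (condition~\ref{comp:preserves} of \defn{comp}), take the relation that acts as the given simulation $\cR_1$ on the replaced coordinate and as the identity on all the others, and split on whether the replaced component participates in the inferred transition. You additionally dispatch condition~\ref{comp:simulated} via the identity relation, a point the paper's proof omits entirely (it only addresses preservation of $\lessint$), so up to that point your write-up is, if anything, more careful. The difference is at the end: you stop at the case where the replaced component does participate, flagging as unresolved the fact that the composite label $c = b_1 \cup \bigcup_{j \in I\setminus\{1\}} a_j$ produced by superset-matching need not belong to $\gamma$, so that rule \eq{positive} does not license the required transition. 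As a self-contained proof, your text therefore has a gap: the one step carrying the real content is missing.

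You should know, however, that the paper does not close this gap either. Its proof ends by setting $b = b_1 \cup \bigcup_{i \in I\setminus\{1\}} a_i \supseteq a$ and asserting ``by \eq{positive}'' that $\tilde{q_1} q_2 \dots q_n \longgoesto[b] \tilde{q_1}' q_2' \dots q_n'$ is a transition of $f(\tilde{B_1}, B_2, \dots, B_n)$ --- but every transition generated by \eq{positive} carries a label in $\gamma$, and the side condition $b \in \gamma$ is never verified. The step is in fact not repairable without changing the hypotheses: take $n=2$, $\gamma = \{\{p\}\}$, let $B_1$ be a single state with a self-loop labelled $\{p\}$, let $\tilde{B_1}$ be a single state with a self-loop labelled $\{p,r\}$, and let $B_2$ be a single state with no transitions. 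Then $B_1 \lessint \tilde{B_1}$, and $f(B_1,B_2)$ has a $\{p\}$-labelled loop, whereas $f(\tilde{B_1},B_2)$ has no transitions at all, since no union of component labels equals $\{p\}$; hence $f(B_1,B_2) \lessint f(\tilde{B_1},B_2)$ fails. So the obstacle you isolated is genuine: the proposition holds only under additional assumptions, for instance that $\gamma$ be upward closed with respect to the relevant alphabets, or that the semantic preorder \eq{lts:less} match labels by equality rather than by inclusion. Your refusal to wave this step through was sound judgement, not a failure to reconstruct the paper's argument.
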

\begin{proof}
We have to show that $f$ preserves the preorder $\lessint$.  By
symmetry, it is sufficient to prove that, for $B_1 \lessint
\tilde{B_1}$ and $B_2,\dots,B_n$, we have $f(B_1, B_2,\dots,B_n)
\lessint f(\tilde{B_1}, B_2, \dots, B_n)$.  First of all, since $A_1
\subseteq \tilde{A_1}$, we have $\bigcup_{i=1}^n A_i \subseteq
\tilde{A_1} \cup \bigcup_{i=2}^n A_i$.

Since $B_1 \lessint \tilde{B_1}$, there exists a relation $\cR_1
\subseteq Q_1 \times \tilde{Q_1}$ total on $Q_1$ and satisfying
\eq{lts:less}.  We can then define a relation $\cR \subseteq (Q_1
\times \prod_{i=2}^n Q_i) \times (\tilde{Q_1} \times \prod_{i=2}^n
Q_i)$, by putting $(q_1, q_2, \dots, q_n) \cR (\tilde{q_1}, q_2',
\dots, q_n')$ iff $q_1 \cR_1 \tilde{q_1}$ and $q_i = q_i'$, for
$i\in[2,n]$. This relation is clearly total.  Let $q_1 q_2 \dots q_n
\goesto[a] q_1'\dots q_n'$ be a transition in $f(B_1,B_2,\dots,B_n)$
and let $q_1\cR_1 \tilde{q_1}$.  By definition of the operator $f$,
this transition must be inferred by the rule \eq{positive} from a set
of transitions $\{q_i \goesto[a_i]_i q_i'\}_{i \in I}$ with $a =
\bigcup_{i\in I} a_i$.  If $1 \not\in I$, then clearly $q_1 = q_1'$
and the corresponding transition $\tilde{q_1} q_2 \dots q_n \goesto[a]
\tilde{q_1} q_2'\dots q_n'$ is possible in $f(\tilde{B_1}, B_2, \dots,
B_n)$.  If $1 \in I$, by \eq{lts:less}, there exists a transition
$\tilde{q_1} \goesto[b_1] \tilde{q_1}'$ such that $a_1 \subseteq b_1$
and $q_1' \cR_1 \tilde{q_1}'$. Hence, $a = a_1 \cup \bigcup_{i \in
  I\setminus\{1\}}a_i \subseteq b \cup \bigcup_{i \in
  I\setminus\{1\}}a_i \bydef{=} b$ and, by \eq{positive}, $\tilde{q_1}
q_2 \dots q_n \goesto[b] \tilde{q_1}' q_2\dots q_n$, which proves the
proposition.
\end{proof}

\begin{example}[Negative premises]
  Consider the family of SOS operators with negative premises, that is
  defined by the rules of the form
  \begin{equation}
    \label{eq:negative}
    \renewcommand{\arraystretch}{1.5}
    \derrule[4]{
      \Big\{q_i \goesto[a_i]_i q_i' \,\Big|\, i \in I\Big\} &
      \Big\{q_i = q_i' \,\Big|\, i \not\in I\Big\} &
      \Big\{q_j \not\longgoesto[b^k_j]_j \,\Big|\, j \in J, k \in K_j\Big\} &
      a = \bigcup_{i\in I} a_i
    }{q_1 \dots q_n \longgoesto[a] q_1' \dots q_n'}\,,
  \end{equation}
  where $q_j \not\longgoesto[b^k_j]_j$ means that there is no
  transition labelled $b^k_j$ possible from the state $q_j$ of
  behaviour $B_j$.

  It is easy to see that such operators do not preserve simulation
  relation $\lessint$ and therefore are not composition operators on
  the LTS behaviour type as defined in \secn{lts}.  An adaptation of
  the ready simulation relation from the initial proposition by Bloom
  \cite{bloom-phd} is necessary to obtain a behaviour type, for which
  such operators would, indeed, be composition operators.
\end{example}

Let us now revisit the {\em separation of concerns} principle
mentioned in the introduction.  In our context, this principle
consists in separating the computation of a system from the
application of {\em glue operators} coordinating its atomic
components.  Intuitively, this means that no additional behaviour
should be introduced by application of a glue operator.  Not only glue
operators are limited to restricting the behaviour of atomic
components by imposing coordination constraints (\cf condition
\ref{comp:simulated} of \defn{comp}), but, on top of that, they should
be {\em memoryless}.  Intuitively, this corresponds to requiring that
two conditions be satisfied: 1)~the glue operator should not add {\em
  state} to the coordinated system and 2)~the actions possible in a
global state of the composed system are completely determined by the
properties of the corresponding states of the constituent subsystems.
The latter condition is justified by the fact that a memoryless
operator is unaware of any states of the system other than the current
one.  In order to impose these additional requirements, we need a
formal notion of state, provided precisely by coalgebras.

\begin{definition}
  \label{defn:glop}
  Let $F$ be a $\mathbf{Set}$-functor.  An $n$-ary composition
  operator $gl$ on a behaviour type over the family of $F$-coalgebras
  is a {\em glue operator} iff there exists a natural transformation
  $sync: F^n \rightarrow F$, such that, for any set $\{B_i = (S_i,
  f_i)\,|\,i \in [1,n]\}$ of $F$-coalgebras, $gl(B_1,\dots,B_n) =
  (S, f)$ with $S = \prod_{i=1}^n S_i$ and $f(s) =
  sync(f_1(s_1),\dots,f_n(s_n))$, for all $s = (s_1,\dots,s_n) \in S$.
\end{definition}

We now consider two classical operators, namely the {\em prefixing
  operator} and {\em choice}.  The former is a unary operator, which
consists in executing a given action before ``running'' the behaviour
to which the prefixing is applied.  The latter is an associative and
commutative binary operator, which consists in running exactly one of
the behaviours to which it is applied.  In order to implement either
of these two operators, one has to "remember" that, respectively, the
initial transition or the choice of the two components has been made,
thus adding state to the composed system.

For behaviour types based on coalgebras, the binary choice operator
$+$ can be formally defined as the coproduct functor on the category
of coalgebras, \ie $(S_1, f_1) + (S_2, f_2) \bydef{=} (S_1 + S_2, f_1
+ f_2)$.

\begin{hypothesis}
  Let $F$ be a $\mathbf{Set}$-functor weakly preserving pullbacks.  In
  a behaviour type over $F$-coalgebras with the semantic preorder
  defined as in \secn{coalgebras} there is no glue operator $gl$, in
  the sense of \defn{glop}, such that $gl \simeq +$.
\end{hypothesis}

To formulate a similar hypothesis for the prefixing operator, a notion
of {\em initial state} is necessary.  Although such a notion is
provided by {\em pointed coalgebras}, we omit this discussion in this
paper.  

Notice also the distinction between choice and {\em interleaving}.
The former consists in choosing once and for all the behaviour to run,
whereas the latter makes this choice independently at each execution
step.

% *********************************************************************
\subsection{Combining composition operators}
\label{sec:combining}

Although composition of operators introduced in \defn{comp:compose}
allows to combine several operators hierarchically, it does not allow
``simultaneous'' application of the constraints imposed by two
operators.

\begin{example}[Simultaneous application of two operators]
  \label{ex:simultaneous}
  Indeed, consider, for example, for $i \in [1,4]$, $B_i = (Q_i, A_i,
  \goesto) \in LTS$ (see \secn{lts}) and four given actions $a_i$,
  such that $a_i \in A_i$ and $a_i \not\in A_j$ for $i \neq j$.
  Consider also two quaternary composition operators $f_1, f_2 \in
  \comp[(4)]$, such that the action of $f_1$ consists in synchronising
  the actions $a_1$ and $a_2$ (\cf \ex{sos}) of the first two
  components it is applied to, whereas $f_2$ synchronises the actions
  $a_3$ and $a_4$ of the last two of its arguments.  More precisely,
  $f_1(B_1,B_2,B_3,B_4) = \gamma_{a_1 a_2} (B_1,B_2) \parallel B_3
  \parallel B_4$ and $f_2(B_1,B_2,B_3,B_4) = B_1 \parallel B_2
  \parallel \gamma_{a_3 a_4} (B_3,B_4)$, where the binary operator
  $\gamma_a$, parametrised by an interaction $a$ (here, $a = a_1 a_2$,
  for $f_1$, and $a = a_3 a_4$, for $f_2$) is defined by the following
  rules, $x$ and $y$ being action variables,
  \begin{gather}
    \begin{array}{*{3}{c}}
      \derrule[2]
              {q_1 \goesto[x] q_1' & x \not\in a}
              {q_1q_2 \longgoesto[x] q_1' q_2}\,, &
      \derrule[2]
              {q_2 \goesto[x] q_2' & x \not\in a}
              {q_1q_2 \longgoesto[x] q_1 q_2'}\,, &
      \derrule[3]
              {q_1 \goesto[x] q_1' & q_2 \goesto[y] q_2' & 
                x,y \not\in a}
              {q_1q_2 \longgoesto[x \cup y] q_1' q_2'}\,,
    \end{array}
    \\
    \derrule[2]
            {\Big\{q_i \longgoesto[a \cap A_i] q_i' \,\Big|\, a\cap A_i \neq \emptyset\Big\} &
              \Big\{q_i = q_i' \,\Big|\, a\cap A_i = \emptyset\Big\}}
            {q_1q_2 \longgoesto[a] q_1' q_2'}\,.
  \end{gather}

  Intuitively, simultaneous application of $f_1$ and $f_2$ to any
  components $B_1$, $B_2$, $B_3$, $B_4$ should enforce, on one hand,
  the synchronisation of $a_1$ and $a_2$ in $B_1$ and $B_2$
  respectively, and, on the other hand, the synchronisation of
  respectively $a_3$ and $a_4$ in $B_3$ and $B_4$.  However, the
  result $f_1(B_1,B_2,B_3,B_4)$ of applying $f_1$ is a single
  component, to which $f_2$ cannot be applied any more, since the
  latter is a quaternary operator.

  Furthermore, considering the specification of the $\gamma_a$
  operator above as {\em ``enforcing the synchronisation of actions
    belonging to the interaction $a$''}, one would expect the
  simultaneous application of $\gamma_{a_1 a_2}$ and $\gamma_{a_3
    a_4}$ to any set of behaviours to achieve the same effect as
  above, without having to explicitly define operators $f_1$ and $f_2$
  and avoiding the associated problem discussed in the previous
  paragraph.
\end{example}

First of all, the semantic preorder $\lessint$ and the operator
$\meet$ can be canonically extended to composition operators, provided
they have the same arity: for any $n \geq 1$ and $gl_1, gl_2 \in
\comp[(n)]$,
\begin{gather}
  \label{eq:comp:refine}
  f_1 \lessint f_2
  \ \bydef{\Longleftrightarrow} \
  \forall B_1,\dots,B_n \in \cB,\
  \Big(f_1(B_1,\dots,B_n) \lessint f_2(B_1,\dots,B_n)\Big)\,,
  \\
  \label{eq:comp:meet}
  \forall B_1,\dots,B_n \in \cB,\ (f_1 \meet f_2)(B_1,\dots,B_n)
  \ \bydef{=} \
  f_1(B_1,\dots,B_n) \meet\, f_2(B_1,\dots,B_n)\,.
\end{gather}

\begin{proposition}
  $(\factor{\comp[(n)]}, \lessint, \meet)$ is a meet-semilattice.
\end{proposition}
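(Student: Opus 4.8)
The plan is to reduce everything to the pointwise definitions \eq{comp:refine} and \eq{comp:meet} together with the fact, established earlier in the excerpt, that $(\factor{\cB}, \lessint, \meet)$ is itself a meet-semilattice. Two things must be checked: first, that $\comp[(n)]$ is closed under the pointwise meet, so that $\meet$ is genuinely an operation on $\comp[(n)]$; and second, that this operation computes greatest lower bounds for the pointwise preorder $\lessint$. Throughout I write $\fat B = (B_1,\dots,B_n)$ for a tuple of arguments.

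I would begin with closure, which I expect to be the main obstacle. Fix $f_1, f_2 \in \comp[(n)]$ and set $f = f_1 \meet f_2$, so that $f(\fat B) = f_1(\fat B) \meet f_2(\fat B)$ by \eq{comp:meet}. Condition \ref{comp:preserves} of \defn{comp} is the easy half: if $B_i \lessint \tilde B$ then $f_1$ applied with $B_i$ is $\lessint$ the same with $\tilde B$, and likewise for $f_2$, and since $\meet$ is the meet of the semilattice $(\factor{\cB}, \lessint, \meet)$ it is monotone — from $a \lessint a'$, $b \lessint b'$ one gets $a \meet b \lessint a \lessint a'$ and $a \meet b \lessint b \lessint b'$, whence $a \meet b \lessint a' \meet b'$ by the universal property — so $f$ preserves $\lessint$ in each argument. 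Condition \ref{comp:simulated} is the delicate one: the semilattice gives only $f(\fat B) = f_1(\fat B) \meet f_2(\fat B) \lessint f_1(\fat B)$, whereas what is required is $f(\fat B) \issimulatedby B_1 \parallel \dots \parallel B_n$. To bridge the $\lessint$ we obtain and the $\issimulatedby$ we need, I would use that in the behaviour types at hand the semantic preorder refines the simulation preorder, i.e. $\lessint\ \subseteq\ \issimulatedby$ (they coincide for traces and for LTS, and a total bisimulation yields a total simulation for coalgebras); transitivity of $\issimulatedby$ then gives $f(\fat B) \issimulatedby f_1(\fat B) \issimulatedby B_1 \parallel \dots \parallel B_n$. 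This confirms $f \in \comp[(n)]$.

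With closure in hand, the semilattice structure is a routine lift of the structure on $\cB$. The preorder \eq{comp:refine} is reflexive and transitive because $\lessint$ on $\cB$ is, the verification being pointwise; consequently $\simeq\ = \lessint \cap \lessint^{-1}$ on operators is exactly the pointwise $\simeq$ on behaviours, which makes $\meet$ well defined on $\factor{\comp[(n)]}$ since it is well defined on $\factor{\cB}$. Finally, for the greatest-lower-bound property: for every $\fat B$ we have $(f_1 \meet f_2)(\fat B) = f_1(\fat B) \meet f_2(\fat B) \lessint f_i(\fat B)$ for $i=1,2$, so $f_1 \meet f_2 \lessint f_1, f_2$ by \eq{comp:refine}; and if $g \in \comp[(n)]$ satisfies $g \lessint f_1$ and $g \lessint f_2$, then pointwise $g(\fat B) \lessint f_1(\fat B)$ and $g(\fat B) \lessint f_2(\fat B)$, so the universal property of $\meet$ in $\cB$ gives $g(\fat B) \lessint f_1(\fat B) \meet f_2(\fat B) = (f_1 \meet f_2)(\fat B)$, i.e. $g \lessint f_1 \meet f_2$. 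Thus $f_1 \meet f_2$ is the meet of $f_1$ and $f_2$, and $(\factor{\comp[(n)]}, \lessint, \meet)$ is a meet-semilattice.
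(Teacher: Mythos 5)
The last paragraph of your proposal is, in substance, the paper's entire proof: the paper also establishes only the greatest-lower-bound property, by exactly your pointwise argument --- for any $f \in \comp[(n)]$ with $f \lessint f_1, f_2$, the semilattice structure of $(\factor{\cB}, \lessint, \meet)$ gives $f(B_1,\dots,B_n) \lessint f_1(B_1,\dots,B_n) \meet f_2(B_1,\dots,B_n) = (f_1 \meet f_2)(B_1,\dots,B_n)$ for every tuple of arguments, hence $f \lessint f_1 \meet f_2$. On that part you and the paper coincide completely.

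Where you go beyond the paper is the closure check, and there you have put your finger on a real difficulty --- but your resolution of it is not available at the level of generality at which the proposition is stated. \secn{operators} fixes an \emph{arbitrary} behaviour type, and \defn{behaviour} postulates no relationship whatsoever between the two preorders: conditions \ref{parallel-simulation} and \ref{parallel-bottom} concern $\issimulatedby$ only, conditions \ref{parallel-semantic} and \ref{meet} concern $\lessint$ and $\meet$ only, and the two groups never interact. The inclusion $\lessint\ \subseteq\ \issimulatedby$ that you invoke is therefore an extra hypothesis: it does hold in the three examples (for Traces and LTS the preorders are taken to coincide, and for coalgebras a total bisimulation yields a total simulation), but it cannot be derived from the axioms, and without it the chain $(f_1 \meet f_2)(B_1,\dots,B_n) \lessint f_1(B_1,\dots,B_n) \issimulatedby B_1 \parallel \dots \parallel B_n$ cannot be collapsed into condition~\ref{comp:simulated} of \defn{comp}. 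So, strictly, your proof establishes the proposition only for behaviour types satisfying this additional assumption. In fairness, the gap you are trying to fill is in the paper, not in your argument: for $\meet$ to be an operation on $\factor{\comp[(n)]}$ at all, $\comp[(n)]$ must be closed under the pointwise meet, and the paper's proof never addresses this --- it silently treats $f_1 \meet f_2$ as a composition operator. Your verification of condition~\ref{comp:preserves} (monotonicity of $\meet$ via the universal property) is sound and uses the axioms alone; it is only condition~\ref{comp:simulated} that resists, and your proposal has the merit of making explicit the hypothesis under which it goes through.
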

\begin{proof}
  Let $f_1,f_2 \in \comp[(n)]$ be two composition operators and
  consider $f \in \comp[(n)]$ such that $f_1\meet f_2 \lessint f
  \lessint f_1, f_2$.  The right-hand relation implies that, for any
  $B \in \cB$, $f(B) \lessint f_1(B)\meet f_2(B) = (f_1 \meet
  f_2)(B)$.  Hence $f \lessint f_1 \meet f_2$ and, together with the
  left-hand relation, this implies $f \simeq f_1 \meet f_2$.
\end{proof}

\begin{example}[Preorder on SOS operators]
  \label{ex:sos:preorder}
  Consider once again the $LTS$ behaviour type defined in \secn{lts}
  and the corresponding family of composition operators defined by
  \eq{positive} in \ex{sos}.  Identifying each operator with the set
  of its defining SOS rules, it follows directly from \cite[Lemma
    3]{BliSif08-express-concur} that, for two such operators $f_1$ and
  $f_2$, $f_1 \lessint f_2$ is equivalent to $f_1 \subseteq f_2$.
  Hence, $f_1 \meet f_2 = f_1 \cap f_2$.
\end{example}

Going back to \ex{simultaneous}, $(f_1 \meet f_2)(B_1,B_2,B_3,B_4)$
represents the behaviour where both the actions of $B_1$ are synchronised
with those of $B_2$, and the actions of $B_3$ are synchronised with those
of $B_4$, as defined respectively by $f_1$ and $f_2$.

In order to allow application of a given composition operator to any set of
component behaviours with cardinality at least the arity of the composition
operator in question, we introduce below the {\em arity extension} for
composition operators.

\begin{definition}
  The \emph{arity extension} of an $n$-ary composition operator $f \in
  \comp[(n)]$ to arity $m \geq n$ is the composition operator $\arity{m}{f}
  \in \comp[(m)]$ defined by putting, for all $B_1,\dots,B_n \in \cB$,
  \begin{eqnarray}
    \label{eq:extension}
    \arity{m}{f}(B_1,\dots, B_n)
    & \bydef{=} 
    & \bigmeet_{\sigma \in S_m} \Big(
    f\left(B_{\sigma(1)},\dots,B_{\sigma(n)}\right) \parallel 
    B_{\sigma(n+1)} \parallel \dots \parallel B_{\sigma(m)}\Big)\,,
  \end{eqnarray}
  where $S_m$ is the group of all permutations of $[1,m]$.  The right-hand
  side of \eq{extension} consists in simultaneously applying $f$ to all
  possible subsets of $n$ components.
\end{definition}

For operators $f_1$ and $f_2$ of \ex{simultaneous}, we have $f_1 =
\arity{4}{(\gamma_{a_1 a_2})}$ and $f_2 = \arity{4}{(\gamma_{a_3 a_4})}$.

\begin{lemma}[Isotony of arity extension]
  Arity extension preserves the semantic preorder, that is, for any
  $f_1,f_2 \in \glue[(n)]$ and $m \geq n$, $f_1 \lessint f_2$ implies
  $\arity{m}{f_1} \lessint \arity{m}{f_2}$.
\end{lemma}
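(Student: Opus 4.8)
The plan is to peel the statement down to two monotonicity facts already guaranteed by \defn{behaviour}: that $\parallel$ preserves $\lessint$, and that the finite meet $\bigmeet$ is monotone for $\lessint$. By the definition \eq{comp:refine} of the preorder on operators, it suffices to fix arbitrary behaviours $B_1,\dots,B_m \in \cB$ and establish the single inequality
\[
  \arity{m}{f_1}(B_1,\dots,B_m) \ \lessint\ \arity{m}{f_2}(B_1,\dots,B_m)\,,
\]
after which the conclusion $\arity{m}{f_1} \lessint \arity{m}{f_2}$ follows by \eq{comp:refine} once more.

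First I would handle one permutation at a time. Fix $\sigma \in S_m$ and abbreviate the common tail $T_\sigma = B_{\sigma(n+1)} \parallel \dots \parallel B_{\sigma(m)}$. Since $f_1 \lessint f_2$, the definition \eq{comp:refine} gives $f_1(B_{\sigma(1)},\dots,B_{\sigma(n)}) \lessint f_2(B_{\sigma(1)},\dots,B_{\sigma(n)})$, and condition~\ref{parallel-semantic} of \defn{behaviour} (preservation of $\lessint$ by $\parallel$) then yields
\[
  f_1(B_{\sigma(1)},\dots,B_{\sigma(n)}) \parallel T_\sigma
  \ \lessint\
  f_2(B_{\sigma(1)},\dots,B_{\sigma(n)}) \parallel T_\sigma\,,
\]
that is, the $\sigma$-indexed factor of \eq{extension} for $f_1$ lies below the corresponding factor for $f_2$.

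It then remains to aggregate over $S_m$. By condition~\ref{meet} of \defn{behaviour}, $(\factor{\cB}, \lessint, \meet)$ is a meet-semilattice, so $\bigmeet_{\sigma}$ is a greatest lower bound (up to $\simeq$) and is therefore monotone: if $a_\sigma \lessint b_\sigma$ for every $\sigma$, then $\bigmeet_\sigma a_\sigma \lessint a_\tau \lessint b_\tau$ for each $\tau$, so $\bigmeet_\sigma a_\sigma$ is a lower bound of all the $b_\tau$ and hence $\bigmeet_\sigma a_\sigma \lessint \bigmeet_\tau b_\tau$ (finiteness of $S_m$ reduces the general statement to a routine induction on the number of factors). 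Applying this to the factors produced in the previous step gives precisely the required inequality between $\arity{m}{f_1}(B_1,\dots,B_m)$ and $\arity{m}{f_2}(B_1,\dots,B_m)$.

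The one point that deserves care, and which I expect to be the main obstacle, is the left/right asymmetry in the per-permutation step: condition~\ref{parallel-semantic} is stated only for the \emph{right} argument of $\parallel$, whereas in \eq{extension} the value $f_i(B_{\sigma(1)},\dots,B_{\sigma(n)})$ sits to the \emph{left} of the tail $T_\sigma$. For the behaviour types of \secn{traces}, \secn{lts} and \secn{coalgebras} this is immaterial, since there $\parallel$ is a commutative monoid and the factor can simply be moved into the right-hand position; in the purely axiomatic setting one should either assume commutativity or record the symmetric left-preservation property explicitly. Everything else is a direct appeal to the semilattice structure, so I anticipate no further difficulty.
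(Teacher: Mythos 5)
Your proof is correct, and there is in fact nothing to compare it against: the paper states this lemma without proof, and your two-step argument \mdash\ per-permutation monotonicity via \eq{comp:refine} and condition~\ref{parallel-semantic} of \defn{behaviour}, followed by monotonicity of the finite meet in the semilattice $(\factor{\cB}, \lessint, \meet)$ \mdash\ is precisely the routine argument the author presumably had in mind.

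The caveat you flag is the one substantive point, and you are right that it is not mere pedantry. As \defn{behaviour} is literally stated, $\lessint$ is only required to be preserved in the \emph{right} argument of $\parallel$, while \eq{extension} puts $f_i(B_{\sigma(1)},\dots,B_{\sigma(n)})$ on the \emph{left} of the tail; and the lemma can genuinely fail without a symmetric hypothesis. Concretely, take $\cB = \Sigma^*$ with $\parallel$ concatenation, $\zerobeh = \varepsilon$, $\issimulatedby$ the full relation, $\lessint$ the prefix order and $\meet$ the longest common prefix: this satisfies \defn{behaviour} verbatim, the constant operators $f_1 \equiv \varepsilon$ and $f_2 \equiv c$ are composition operators with $f_1 \lessint f_2$, yet $\arity{2}{f_1}(a,a) = a \not\lessint ca = \arity{2}{f_2}(a,a)$ for $a \neq c$. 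Within the paper's own framework, the cleanest repair is not to add commutativity as an extra hypothesis but to invoke the paper's earlier (also unproven) lemma asserting that $\parallel$ is itself a composition operator: condition~\ref{comp:preserves} of \defn{comp} then gives preservation of $\lessint$ in \emph{every} argument, in particular the left one, and your per-permutation step goes through. That earlier lemma, in turn, tacitly relies on commutativity \mdash\ which does hold in all three example types (the above counterexample falsifies it as well) \mdash\ so your diagnosis of where the hidden assumption lies is accurate.
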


\begin{example}[Disjoint behaviours]
  \label{ex:disjoint}
  Going back to \ex{simultaneous} of \secn{combining} and considering that,
  for $i \neq j$, $a_i \not\in A_j$, one can see that
  \[
  f_1 \meet f_2 \simeq 
  \arity{4}{\gamma_{a_1 a_2}} \meet \arity{4}{\gamma_{a_3 a_4}} \simeq
  \arity{4}{(\gamma_{a_1 a_2} \meet \gamma_{a_3 a_4})}\,.
  \]
  The assumption that, for $i \neq j$, $a_i \not\in A_j$ is essential here.
  It guarantees that the operators $\gamma_{a_1 a_2}$ and $\gamma_{a_3
    a_4}$ do not {\em interfere} with the behaviour of components $B_3,
  B_4$ and $B_1, B_2$ respectively.  Precise characterisation of such a
  notion of {\em non-interference} will be an important part of our future
  work.
\end{example}

% **************************************************************************
\subsection{Symmetrical composition operators}
\label{sec:symmetrical}

Notice that the operator $\arity{m}{f}$ defined by \eq{extension} is
symmetrical in the sense of the following definition.

\begin{definition}
  An operator $f: X^n \rightarrow Y$ is called {\em symmetrical} iff, for
  any permutation $\sigma \in S_n$ and any $x_1,\dots,x_n \in X$, holds the
  equation $f(x_1,\dots,x_n) = f(x_{\sigma(1)},\dots, x_{\sigma(n)})$.
\end{definition}

Clearly, an operator $f \in \comp[(n)]$ is symmetrical iff $f \simeq
\arity{n}{f}$.  Furthermore, for symmetrical operators, one can also
unambiguously define the arity reduction.

\begin{definition}
  The \emph{arity reduction} of a symmetrical $n$-ary composition operator
  $f \in \comp[(n)]$ to arity $m \leq n$ is the composition operator
  $\arity{m}{f} \in \comp[(m)]$ defined by putting, for all $B_1,\dots,B_m
  \in \cB$,
  \begin{eqnarray}
    \label{eq:reduction}
    \arity{m}{f}(B_1,\dots, B_n)
    & \bydef{=} &
    f(B_1,\dots, B_n, \zerobeh,\dots, \zerobeh)\,.
  \end{eqnarray}
\end{definition}

Given a symmetrical operator $f \in \comp$, one can combine both
concepts\mdash arity extension and reduction\mdash to define the operator
$\tilde{f}: \powerset(\cB) \rightarrow \cB$, by putting, for any finite
subset $\bB \subseteq \cB$, $\tilde{f}(\bB) \bydef{=}
\arity{|\bB|}{f}(\bB)$.

Observe that several popular parallel composition operators, such as the
ones used in CCS \cite{milner89}, CSP \cite{hoare85} and BIP
\cite{BliSif07-acp-emsoft}, are symmetrical.

% **************************************************************************
% **************************************************************************

\section{Discussion and related work} 
\label{sec:discussion}

The complete bibliography, as related to the motivations behind the
present paper, is yet to be established.  However, several
contributions can already be mentioned.

Based on the same observations about the importance of studying glue
as a first-class notion, Abstract Behavior Types (ABM) were proposed
in \cite{arbab05-abt} guiding the design of the Reo language.  ABT
propose to characterise components as channels (or dataflow
transformers) that do not provide any information as to the manner in
which the defining transformations are computed.  This approach is
radically different from the one taken in the present paper, since the
main emphasis is put on one aspect of component behaviour (namely the
dataflow transformation) and its expressive power with regards to
assembling more complex transfer functions.  In particular composition
of channels boils down to pipeline assembly.  Although, ABT are likely
to provide another interesting case study for the behaviour types we
propose, they lack the abstraction necessary to model a larger class
of behaviour as intended here.

It would be interesting to further verify the robustness of the
proposed framework by defining behaviour types based on interface
\cite{deAlfaro01-interface-aut} or modal \cite{larsen07-modal-aut}
automata, whereof the particularity is that the corresponding standard
refinement relations are contravariant as opposed to the examples of
this paper.

Furthermore, although all the examples provided in this paper can be
interpreted as state-based behaviour types and, therefore, modelled as
coalgebras of the appropriate types, the intention is to keep the
abstraction level sufficiently high in order to be able to accommodate
for behaviour types that do not have a clearly identifiable notion of
state.  In particular, it would be interesting to investigate
applicability of this framework to continuous time systems as modelled
in Simulink\footnote{http://www.mathworks.com/products/simulink/} or
Modelica\footnote{http://www.modelica.org/}.

As mentioned above, most\mdash if not all\mdash behaviour types, of
interest for the author of this paper, can be modelled as coalgebras
of a suitable type.  Since their introduction as a model for system
behaviour, coalgebras have been subject to extensive studies.  It
seems, however, that most of these studies were focusing primarily on
coalgebras as a way to model individual component behaviour.
Although, in \cite{Barbosa00, Sokolova08-microcosm}, coalgebra
composition has been addressed in a much more general form than in the
present paper, some questions remain unanswered: {\em What is the
  class of coalgebra types, for which a meaningful maximal interaction
  operator can be constructively defined?}  {\em How does one
  characterise the composition operators other than maximal
  interaction?} The latter question is related to the work presented
in \cite{hasuo11-microcosm,turi97-math-op-sem}, where GSOS-style
operational semantics is studied from the categorical point of view.

Last but not least, an important subject that we have mentioned in
this paper and that we are planning to address as part of our future
work is the interference between glue operators.  Consider two glue
operators $gl_1$ and $gl_2$, and a family of behaviours $\{B_i\}_{i =
  1}^n$.  Assume, furthermore, that, for $i =1,2$,
$\arity{n}{gl_i}(B_1,\dots,B_n)$ satisfies some given property $P_i$.
{\em What conditions have to be satisfied by $gl_1$ and $gl_2$, on one
  hand, and $\{B_i\}_{i = 1}^n$, on the other hand, for any of the
  composite behaviours
  $\arity{1}{gl_i}(\arity{n}{gl_j}(B_1,\dots,B_n))$ ($i \neq j$),
  $\arity{n}{(gl_1 \meet gl_2)}(B_1,\dots,B_n)$, \etc to satisfy $P_1
  \land P_2$?} {\em Given computable representations of $gl_1$ and
  $gl_2$ how does one compute the operator imposing $P_1 \land P_2$?}
The latter question was partially addressed in \ex{sos:preorder} of
\secn{combining}.  Furthermore, our result in
\cite{BliSif08-express-concur} suggests that this example can be
generalised, on a suitable behaviour type, to SOS operators with
negative premises.  {\em Can a similar result be obtained for a larger
  class of coalgebras types?}

A first, na\"{\i}ve approach consists in defining a class of {\em
  distributive} behaviour types characterised by the distributivity of
their maximal interaction operator over the meet operator, that is,
for any $B_1,B_2,B_3 \in \cB$,
  \begin{eqnarray*}
    (B_1 \meet B_2) \parallel B_3 
    & \simeq & 
    (B_1 \parallel B_3)\ \meet\ (B_2 \parallel B_3)\,.
  \end{eqnarray*}
It is easy to see that distributivity of a behaviour type implies that
of the arity extension of composition operators over the meet
operator: for any $f_1,f_2 \in \comp[(n)]$ and $m \geq n$,
\[
\arity{m}{(f_1 \meet f_2)} 
\simeq 
\arity{m}{f_1} \meet \arity{m}{f_2}\,.
\]
Assuming that we know how to compute $f_1 \meet f_2$, we can then
reasonably expect the obtained operator to satisfy both properties
imposed by $f_1$ and $f_2$.

Although none of behaviour types considered in this paper appear to be
distributive, all the counter-examples we have considered while
preparing the paper were based on the fact that several components
shared certain actions.  As illustrated by \ex{disjoint} of
\secn{combining}, ``local distributivity'' can be achieved provided
the ``absence of conflicts'' in the component interfaces.  An
important question to be addressed in the future work is: {\em How can
  these notions of ``local distributivity'' and ``absence of
  conflicts'' be formalised and generalised to larger classes of
  behaviour types?}

% *********************************************************************
% *********************************************************************

\section{Conclusion} 
\label{sec:conclusion}

The goal of this paper was to make a first step towards the definition
of a formal framework for studying behaviour composition in a setting
sufficiently general to provide insight into how the component-based
systems should be modelled and compared.

We have proposed the notions of {\em behaviour type} and {\em
  composition operator}, which, while striving for generality, allow
to capture some essential properties expected when reasoning
intuitively about component composition.  We have illustrated the
notion of behaviour type on three examples, namely Traces, Labelled
Transition Systems and $F$-coalgebras with a restricted class of
non-deterministic functors.

In the framework proposed in this paper, a {\em behaviour type} is a
tuple $(\cB, \parallel, \issimulatedby, \lessint, \meet, \zerobeh)$,
where $\cB$ is the set of underlying behaviours (Traces, Labelled
Transition Systems, Coalgebras, etc.); $\parallel$ is the maximal
interaction operator
%% \footnote{
%% %
%%   We intentionally avoid using the term ``parallel composition'',
%%   for $\parallel$, in order to avoid confusion with any particular
%%   commonly used operator.
%% %
%% }
defining the joint behaviour of two components without any
coordination constraints; $\issimulatedby$ and $\lessint$ are two
preorders used respectively to represent ``containment'', or
simulation, relation between two behaviours and the semantic relation
reflecting the fact that one component can be substituted by another
one while essentially preserving the intended behaviour.  The relation
$\simeq\ =\ \lessint \cap \lessint^{-1}$ is a congruence for
composition operators.  The example provided in \secn{coalgebras}
supports our idea that these two preorders need not necessarily be the
same.  We require that $(\factor{\cB}, \lessint, \meet)$ be a
meet-semilattice, whereby the meet operator $\meet$ identifies the
common behaviour of two components.  Finally, the notion of zero
behaviour $\zerobeh$ serves as a ``sanity check'' for the behaviour
type.  Intuitively, it represents a component that does nothing.  On
one hand, it should not influence the behaviour of other components
when placed in parallel (for any $B \in \cB$, $B \parallel \zerobeh
\simeq B$) and, on the other hand, it should be simulated by all other
behaviours (for any $B \in \cB$, $0 \issimulatedby B$).

The requirement that the semantic preorder of a given behaviour type
induce a meet-semilattice structure has allowed us to define a meet
of composition operators, representing their simultaneous application
to a given set of behaviours and, consequently, to extend any
composition operator to a symmetrical one, applicable to any finite
set of behaviours.  This, together with the definition of {\em
  memoryless glue operators}, takes us one step closer to a formal
understanding of the separation of concerns principle that we have
advocated in our previous papers, and which stipulates that the
computational aspects of the system should be localised in the atomic
components, whereas the coordination layer responsible for managing
concurrency should be realised by memoryless glue operators.

Finally, we have discussed some related work and key questions,
arising from the proposed framework, that are important for the
understanding of fundamental principles of component-based design.

% *********************************************************************
% *********************************************************************

\section*{Acknowledgements} 
I would like to express my gratitude to the anonymous reviewers for
the instructive discussion on the ICE 2012 forum.  I would also like
to thank Ana Sokolova and Alexandra Silva for the discussion after the
workshop and the pointers to the literature allowing me to include
the powerset functor in \secn{coalgebras}.

\bibliographystyle{eptcs}
\bibliography{glue,express,constraints,connectors,reo,bip,latex8}

\begin{thebibliography}{10}
\providecommand{\bibitemdeclare}[2]{}
\providecommand{\surnamestart}{}
\providecommand{\surnameend}{}
\providecommand{\urlprefix}{Available at }
\providecommand{\url}[1]{\texttt{#1}}
\providecommand{\href}[2]{\texttt{#2}}
\providecommand{\urlalt}[2]{\href{#1}{#2}}
\providecommand{\doi}[1]{doi:\urlalt{http://dx.doi.org/#1}{#1}}
\providecommand{\bibinfo}[2]{#2}

\bibitemdeclare{article}{deAlfaro01-interface-aut}
\bibitem{deAlfaro01-interface-aut}
\bibinfo{author}{Luca \surnamestart de~Alfaro\surnameend} \&
  \bibinfo{author}{Thomas~A. \surnamestart Henzinger\surnameend}
  (\bibinfo{year}{2001}): \emph{\bibinfo{title}{Interface automata}}.
\newblock {\sl \bibinfo{journal}{SIGSOFT Softw. Eng. Notes}}
  \bibinfo{volume}{26}(\bibinfo{number}{5}), pp. \bibinfo{pages}{109--120},
  \doi{10.1145/503271.503226}.

\bibitemdeclare{article}{arbab05-abt}
\bibitem{arbab05-abt}
\bibinfo{author}{Farhad \surnamestart Arbab\surnameend} (\bibinfo{year}{2005}):
  \emph{\bibinfo{title}{{Abstract Behavior Types}: {A} foundation model for
  components and their composition}}.
\newblock {\sl \bibinfo{journal}{Sci. Comput. Program.}}
  \bibinfo{volume}{55}(\bibinfo{number}{1--3}), pp. \bibinfo{pages}{3--52},
  \doi{10.1016/j.scico.2004.05.010}.

\bibitemdeclare{inproceedings}{Barbosa00}
\bibitem{Barbosa00}
\bibinfo{author}{Lu\'{\i}s~Soares \surnamestart Barbosa\surnameend}
  (\bibinfo{year}{2000}): \emph{\bibinfo{title}{Components as Processes: An
  Exercise in Coalgebraic Modeling}}.
\newblock In \bibinfo{editor}{Scott~F. \surnamestart Smith\surnameend} \&
  \bibinfo{editor}{Carolyn~L. \surnamestart Talcott\surnameend}, editors: {\sl
  \bibinfo{booktitle}{FMOODS}}, {\sl \bibinfo{series}{IFIP Conference
  Proceedings}} \bibinfo{volume}{177}, \bibinfo{publisher}{Kluwer}, pp.
  \bibinfo{pages}{397--418}.

\bibitemdeclare{inproceedings}{bip06}
\bibitem{bip06}
\bibinfo{author}{Ananda \surnamestart Basu\surnameend}, \bibinfo{author}{Marius
  \surnamestart Bozga\surnameend} \& \bibinfo{author}{Joseph \surnamestart
  Sifakis\surnameend} (\bibinfo{year}{2006}): \emph{\bibinfo{title}{Modeling
  Heterogeneous Real-time Components in {BIP}}}.
\newblock In: {\sl \bibinfo{booktitle}{$4^{th}$ {IEEE} Int. Conf. on Software
  Engineering and Formal Methods ({SEFM06})}}, pp. \bibinfo{pages}{3--12},
  \doi{10.1109/SEFM.2006.27}.
\newblock \bibinfo{note}{Invited talk}.

\bibitemdeclare{inproceedings}{D-Finder2}
\bibitem{D-Finder2}
\bibinfo{author}{Saddek \surnamestart Bensalem\surnameend},
  \bibinfo{author}{Andreas \surnamestart Griesmayer\surnameend},
  \bibinfo{author}{Axel \surnamestart Legay\surnameend},
  \bibinfo{author}{Thanh-Hung \surnamestart Nguyen\surnameend},
  \bibinfo{author}{Joseph \surnamestart Sifakis\surnameend} \&
  \bibinfo{author}{Rongjie \surnamestart Yan\surnameend}
  (\bibinfo{year}{2011}): \emph{\bibinfo{title}{{D-Finder 2}: towards efficient
  correctness of incremental design}}.
\newblock In: {\sl \bibinfo{booktitle}{Proceedings of the $3^{rd}$
  international conference on NASA Formal methods}}, \bibinfo{series}{NFM'11},
  \bibinfo{publisher}{Springer-Verlag}, \bibinfo{address}{Berlin, Heidelberg},
  pp. \bibinfo{pages}{453--458}, \doi{10.1007/978-3-642-20398-5\_32}.
\newblock \urlprefix\url{http://dl.acm.org/citation.cfm?id=1986308.1986344}.

\bibitemdeclare{inproceedings}{BliSif07-acp-emsoft}
\bibitem{BliSif07-acp-emsoft}
\bibinfo{author}{Simon \surnamestart Bliudze\surnameend} \&
  \bibinfo{author}{Joseph \surnamestart Sifakis\surnameend}
  (\bibinfo{year}{2007}): \emph{\bibinfo{title}{The Algebra of Connectors~---
  {S}tructuring Interaction in {BIP}}}.
\newblock In: {\sl \bibinfo{booktitle}{Proc. of the {EMSOFT'07}}},
  \bibinfo{organization}{ACM SigBED}, pp. \bibinfo{pages}{11--20},
  \doi{10.1145/1289927.1289935}.

\bibitemdeclare{inproceedings}{BliSif08-express-concur}
\bibitem{BliSif08-express-concur}
\bibinfo{author}{Simon \surnamestart Bliudze\surnameend} \&
  \bibinfo{author}{Joseph \surnamestart Sifakis\surnameend}
  (\bibinfo{year}{2008}): \emph{\bibinfo{title}{A Notion of Glue Expressiveness
  for Component-Based Systems}}.
\newblock In \bibinfo{editor}{Franck \surnamestart van Breugel\surnameend} \&
  \bibinfo{editor}{Marsha \surnamestart Chechik\surnameend}, editors: {\sl
  \bibinfo{booktitle}{CONCUR 2008}}, {\sl \bibinfo{series}{LNCS}}
  \bibinfo{volume}{5201}, \bibinfo{publisher}{Springer}, pp.
  \bibinfo{pages}{508--522}.

\bibitemdeclare{article}{BliSif10-causal-fmsd}
\bibitem{BliSif10-causal-fmsd}
\bibinfo{author}{Simon \surnamestart Bliudze\surnameend} \&
  \bibinfo{author}{Joseph \surnamestart Sifakis\surnameend}
  (\bibinfo{year}{2010}): \emph{\bibinfo{title}{Causal semantics for the
  algebra of connectors}}.
\newblock {\sl \bibinfo{journal}{Formal Methods in System Design}}
  \bibinfo{volume}{36}(\bibinfo{number}{2}), pp. \bibinfo{pages}{167--194},
  \doi{10.1007/s10703-010-0091-z}.

\bibitemdeclare{inproceedings}{BliSif11-constraints-sc}
\bibitem{BliSif11-constraints-sc}
\bibinfo{author}{Simon \surnamestart Bliudze\surnameend} \&
  \bibinfo{author}{Joseph \surnamestart Sifakis\surnameend}
  (\bibinfo{year}{2011}): \emph{\bibinfo{title}{Synthesizing Glue Operators
  from Glue Constraints for the Construction of Component-Based Systems}}.
\newblock In \bibinfo{editor}{Sven \surnamestart Apel\surnameend} \&
  \bibinfo{editor}{Ethan \surnamestart Jackson\surnameend}, editors: {\sl
  \bibinfo{booktitle}{$10^{th}$ International Conference on Software
  Composition}}, {\sl \bibinfo{series}{LNCS}} \bibinfo{volume}{6708},
  \bibinfo{publisher}{Springer}, pp. \bibinfo{pages}{51--67},
  \doi{10.1007/978-3-642-22045-6\_4}.

\bibitemdeclare{phdthesis}{bloom-phd}
\bibitem{bloom-phd}
\bibinfo{author}{Bard \surnamestart Bloom\surnameend} (\bibinfo{year}{1989}):
  \emph{\bibinfo{title}{Ready Simulation, Bisimulation, and the Semantics of
  {CCS}-Like Languages}}.
\newblock Ph.D. thesis, \bibinfo{school}{Massachusetts Institute of
  Technology}.

\bibitemdeclare{article}{montanari06}
\bibitem{montanari06}
\bibinfo{author}{Roberto \surnamestart Bruni\surnameend}, \bibinfo{author}{Ivan
  \surnamestart Lanese\surnameend} \& \bibinfo{author}{Ugo \surnamestart
  Montanari\surnameend} (\bibinfo{year}{2006}): \emph{\bibinfo{title}{A basic
  algebra of stateless connectors}}.
\newblock {\sl \bibinfo{journal}{Theor. Comput. Sci.}}
  \bibinfo{volume}{366}(\bibinfo{number}{1}), pp. \bibinfo{pages}{98--120},
  \doi{10.1016/j.tcs.2006.07.005}.

\bibitemdeclare{article}{DeconstructingReo}
\bibitem{DeconstructingReo}
\bibinfo{author}{Dave \surnamestart Clarke\surnameend},
  \bibinfo{author}{Jos{\'e} \surnamestart Proen\c{c}a\surnameend},
  \bibinfo{author}{Alexander \surnamestart Lazovik\surnameend} \&
  \bibinfo{author}{Farhad \surnamestart Arbab\surnameend}
  (\bibinfo{year}{2009}): \emph{\bibinfo{title}{Deconstructing {Reo}}}.
\newblock {\sl \bibinfo{journal}{ENTCS}}
  \bibinfo{volume}{229}(\bibinfo{number}{2}), pp. \bibinfo{pages}{43--58},
  \doi{10.1016/j.entcs.2009.06.028}.

\bibitemdeclare{article}{clements95-fromsubroutines}
\bibitem{clements95-fromsubroutines}
\bibinfo{author}{Paul~C. \surnamestart Clements\surnameend}
  (\bibinfo{year}{1995}): \emph{\bibinfo{title}{From Subroutines to Subsystems:
  Component-Based Software Development}}.
\newblock {\sl \bibinfo{journal}{The American Programmer}}
  \bibinfo{volume}{8}(\bibinfo{number}{11}).

\bibitemdeclare{article}{coulson08}
\bibitem{coulson08}
\bibinfo{author}{Geoff \surnamestart Coulson\surnameend},
  \bibinfo{author}{Gordon \surnamestart Blair\surnameend},
  \bibinfo{author}{Paul \surnamestart Grace\surnameend},
  \bibinfo{author}{Francois \surnamestart Taiani\surnameend},
  \bibinfo{author}{Ackbar \surnamestart Joolia\surnameend},
  \bibinfo{author}{Kevin \surnamestart Lee\surnameend},
  \bibinfo{author}{Jo~\surnamestart Ueyama\surnameend} \&
  \bibinfo{author}{Thirunavukkarasu \surnamestart Sivaharan\surnameend}
  (\bibinfo{year}{2008}): \emph{\bibinfo{title}{A generic component model for
  building systems software}}.
\newblock {\sl \bibinfo{journal}{ACM Trans. Comput. Syst.}}
  \bibinfo{volume}{26}(\bibinfo{number}{1}), pp. \bibinfo{pages}{1:1--1:42},
  \doi{10.1145/1328671.1328672}.

\bibitemdeclare{inproceedings}{diGiusto11}
\bibitem{diGiusto11}
\bibinfo{author}{Cinzia~Di \surnamestart Giusto\surnameend} \&
  \bibinfo{author}{Jean-Bernard \surnamestart Stefani\surnameend}
  (\bibinfo{year}{2011}): \emph{\bibinfo{title}{Revisiting Glue Expressiveness
  in Component-Based Systems}}.
\newblock In \bibinfo{editor}{Wolfgang~De \surnamestart Meuter\surnameend} \&
  \bibinfo{editor}{Gruia-Catalin \surnamestart Roman\surnameend}, editors: {\sl
  \bibinfo{booktitle}{COORDINATION}}, {\sl \bibinfo{series}{Lecture Notes in
  Computer Science}} \bibinfo{volume}{6721}, \bibinfo{publisher}{Springer}, pp.
  \bibinfo{pages}{16--30}, \doi{10.1007/978-3-642-21464-6\_2}.

\bibitemdeclare{inproceedings}{hasuo11-microcosm}
\bibitem{hasuo11-microcosm}
\bibinfo{author}{Ichiro \surnamestart Hasuo\surnameend} (\bibinfo{year}{2011}):
  \emph{\bibinfo{title}{The Microcosm Principle and Compositionality of
  GSOS-Based Component Calculi}}.
\newblock In \bibinfo{editor}{Andrea \surnamestart Corradini\surnameend},
  \bibinfo{editor}{Bartek \surnamestart Klin\surnameend} \&
  \bibinfo{editor}{Corina \surnamestart C\^{\i}rstea\surnameend}, editors: {\sl
  \bibinfo{booktitle}{CALCO}}, {\sl \bibinfo{series}{Lecture Notes in Computer
  Science}} \bibinfo{volume}{6859}, \bibinfo{publisher}{Springer}, pp.
  \bibinfo{pages}{222--236}, \doi{10.1007/978-3-642-22944-2\_16}.

\bibitemdeclare{inproceedings}{Sokolova08-microcosm}
\bibitem{Sokolova08-microcosm}
\bibinfo{author}{Ichiro \surnamestart Hasuo\surnameend}, \bibinfo{author}{Bart
  \surnamestart Jacobs\surnameend} \& \bibinfo{author}{Ana \surnamestart
  Sokolova\surnameend} (\bibinfo{year}{2008}): \emph{\bibinfo{title}{The
  Microcosm Principle and Concurrency in Coalgebra}}.
\newblock In \bibinfo{editor}{Roberto~M. \surnamestart Amadio\surnameend},
  editor: {\sl \bibinfo{booktitle}{FoSSaCS}}, {\sl \bibinfo{series}{LNCS}}
  \bibinfo{volume}{4962}, \bibinfo{publisher}{Springer}, pp.
  \bibinfo{pages}{246--260}, \doi{10.1007/978-3-540-78499-9\_18}.

\bibitemdeclare{book}{hoare85}
\bibitem{hoare85}
\bibinfo{author}{C.~A.~R. \surnamestart Hoare\surnameend}
  (\bibinfo{year}{1985}): \emph{\bibinfo{title}{Communicating Sequential
  Processes}}.
\newblock \bibinfo{series}{Prentice Hall International Series in Computer
  Science}, \bibinfo{publisher}{Prentice Hall}.

\bibitemdeclare{incollection}{larsen07-modal-aut}
\bibitem{larsen07-modal-aut}
\bibinfo{author}{Kim \surnamestart Larsen\surnameend}, \bibinfo{author}{Ulrik
  \surnamestart Nyman\surnameend} \& \bibinfo{author}{Andrzej \surnamestart
  W{\c{a}}sowski\surnameend} (\bibinfo{year}{2007}):
  \emph{\bibinfo{title}{Modal I/O Automata for Interface and Product Line
  Theories}}.
\newblock In \bibinfo{editor}{Rocco \surnamestart De~Nicola\surnameend},
  editor: {\sl \bibinfo{booktitle}{Programming Languages and Systems}}, {\sl
  \bibinfo{series}{Lecture Notes in Computer Science}} \bibinfo{volume}{4421},
  \bibinfo{publisher}{Springer Berlin / Heidelberg}, pp.
  \bibinfo{pages}{64--79}, \doi{10.1007/978-3-540-71316-6\_6}.

\bibitemdeclare{inproceedings}{vardi09-libraries}
\bibitem{vardi09-libraries}
\bibinfo{author}{Yoad \surnamestart Lustig\surnameend} \&
  \bibinfo{author}{Moshe~Y. \surnamestart Vardi\surnameend}
  (\bibinfo{year}{2009}): \emph{\bibinfo{title}{Synthesis from Component
  Libraries}}.
\newblock In: {\sl \bibinfo{booktitle}{FOSSACS'09: Proceedings of the $12^{th}$
  International Conference on Foundations of Software Science and Computational
  Structures}}, \bibinfo{publisher}{Springer-Verlag}, \bibinfo{address}{Berlin,
  Heidelberg}, pp. \bibinfo{pages}{395--409},
  \doi{10.1007/978-3-642-00596-1\_28}.

\bibitemdeclare{article}{mckinley04-adaptive-sw}
\bibitem{mckinley04-adaptive-sw}
\bibinfo{author}{Philip~K. \surnamestart McKinley\surnameend},
  \bibinfo{author}{Seyed~Masoud \surnamestart Sadjadi\surnameend},
  \bibinfo{author}{Eric~P. \surnamestart Kasten\surnameend} \&
  \bibinfo{author}{Betty~H.C. \surnamestart Cheng\surnameend}
  (\bibinfo{year}{2004}): \emph{\bibinfo{title}{Composing adaptive software}}.
\newblock {\sl \bibinfo{journal}{Computer}}
  \bibinfo{volume}{37}(\bibinfo{number}{7}), pp. \bibinfo{pages}{56--64},
  \doi{10.1109/MC.2004.48}.

\bibitemdeclare{article}{milner83-calculi}
\bibitem{milner83-calculi}
\bibinfo{author}{Robin \surnamestart Milner\surnameend} (\bibinfo{year}{1983}):
  \emph{\bibinfo{title}{Calculi for synchrony and asynchrony}}.
\newblock {\sl \bibinfo{journal}{Theoretical Computer Science}}
  \bibinfo{volume}{25}(\bibinfo{number}{3}), pp. \bibinfo{pages}{267--310},
  \doi{10.1016/0304-3975(83)90114-7}.

\bibitemdeclare{book}{milner89}
\bibitem{milner89}
\bibinfo{author}{Robin \surnamestart Milner\surnameend} (\bibinfo{year}{1989}):
  \emph{\bibinfo{title}{Communication and Concurrency}}.
\newblock \bibinfo{series}{Prentice Hall International Series in Computer
  Science}, \bibinfo{publisher}{Prentice Hall}.

\bibitemdeclare{techreport}{plotkin81structural}
\bibitem{plotkin81structural}
\bibinfo{author}{Gordon~D. \surnamestart Plotkin\surnameend}
  (\bibinfo{year}{1981}): \emph{\bibinfo{title}{A Structural Approach to
  Operational Semantics}}.
\newblock \bibinfo{type}{Technical Report} \bibinfo{number}{DAIMI FN-19},
  \bibinfo{institution}{University of Aarhus}.
\newblock \urlprefix\url{http://citeseer.ist.psu.edu/plotkin81structural.html}.

\bibitemdeclare{article}{pnueli90-synthesis-hard}
\bibitem{pnueli90-synthesis-hard}
\bibinfo{author}{Amir \surnamestart Pnueli\surnameend} \& \bibinfo{author}{Roni
  \surnamestart Rosner\surnameend} (\bibinfo{year}{1990}):
  \emph{\bibinfo{title}{Distributed reactive systems are hard to synthesize}}.
\newblock {\sl \bibinfo{journal}{Annual IEEE Symposium on Foundations of
  Computer Science}} \bibinfo{volume}{2}, pp. \bibinfo{pages}{746--757},
  \doi{10.1109/FSCS.1990.89597}.

\bibitemdeclare{inproceedings}{rashid02-requirements}
\bibitem{rashid02-requirements}
\bibinfo{author}{Awais \surnamestart Rashid\surnameend}, \bibinfo{author}{Peter
  \surnamestart Sawyer\surnameend}, \bibinfo{author}{Ana \surnamestart
  Moreira\surnameend} \& \bibinfo{author}{Jo{\~a}o \surnamestart
  Ara{\'u}jo\surnameend} (\bibinfo{year}{2002}): \emph{\bibinfo{title}{Early
  aspects: a model for aspect-oriented requirements engineering}}.
\newblock In: {\sl \bibinfo{booktitle}{Proceedings of the IEEE Joint
  International Conference on Requirements Engineering (RE'02).}}, pp.
  \bibinfo{pages}{199--202}, \doi{10.1109/ICRE.2002.1048526}.

\bibitemdeclare{article}{Rutten00}
\bibitem{Rutten00}
\bibinfo{author}{Jan J. M.~M. \surnamestart Rutten\surnameend}
  (\bibinfo{year}{2000}): \emph{\bibinfo{title}{Universal coalgebra: a theory
  of systems}}.
\newblock {\sl \bibinfo{journal}{Theor. Comput. Sci.}}
  \bibinfo{volume}{249}(\bibinfo{number}{1}), pp. \bibinfo{pages}{3--80},
  \doi{10.1016/S0304-3975(00)00056-6}.

\bibitemdeclare{inproceedings}{Sifakis}
\bibitem{Sifakis}
\bibinfo{author}{J.~\surnamestart Sifakis\surnameend} (\bibinfo{year}{2005}):
  \emph{\bibinfo{title}{A Framework for Component-based Construction}}.
\newblock In: {\sl \bibinfo{booktitle}{Proceedings of the Third International
  Conference on Software Engineering and Formal Methods ({SEFM})}},
  \bibinfo{publisher}{IEEE Computer Society}, pp. \bibinfo{pages}{293--300}.

\bibitemdeclare{phdthesis}{Silva-PhD}
\bibitem{Silva-PhD}
\bibinfo{author}{Alexandra \surnamestart Silva\surnameend}
  (\bibinfo{year}{2010}): \emph{\bibinfo{title}{Kleene Coalgebra}}.
\newblock Ph.D. thesis, \bibinfo{school}{CWI}, \bibinfo{address}{Amsterdam, The
  Netherlands}.

\bibitemdeclare{phdthesis}{Sokolova-PhD}
\bibitem{Sokolova-PhD}
\bibinfo{author}{Ana \surnamestart Sokolova\surnameend} (\bibinfo{year}{2005}):
  \emph{\bibinfo{title}{Coalgebraic Analysis of Probabilistic Systems}}.
\newblock Ph.D. thesis, \bibinfo{school}{TU Eindhoven},
  \bibinfo{address}{Eindhoven, The Netherlands}.

\bibitemdeclare{inproceedings}{turi97-math-op-sem}
\bibitem{turi97-math-op-sem}
\bibinfo{author}{Daniele \surnamestart Turi\surnameend} \&
  \bibinfo{author}{Gordon~D. \surnamestart Plotkin\surnameend}
  (\bibinfo{year}{1997}): \emph{\bibinfo{title}{Towards a Mathematical
  Operational Semantics}}.
\newblock In: {\sl \bibinfo{booktitle}{LICS}}, \bibinfo{publisher}{IEEE
  Computer Society}, pp. \bibinfo{pages}{280--291},
  \doi{10.1109/LICS.1997.614955}.

\bibitemdeclare{inproceedings}{Wegner96}
\bibitem{Wegner96}
\bibinfo{author}{Peter \surnamestart Wegner\surnameend} (\bibinfo{year}{1996}):
  \emph{\bibinfo{title}{Coordination as constrained interaction (extended
  abstract)}}.
\newblock In: {\sl \bibinfo{booktitle}{Proc. of the First International
  Conference on Coordination Languages and Models}}, {\sl
  \bibinfo{series}{LNCS}} \bibinfo{volume}{1061},
  \bibinfo{publisher}{Springer-Verlag}, \bibinfo{address}{Springer Berlin /
  Heidelberg}, pp. \bibinfo{pages}{28--33}, \doi{10.1007/3-540-61052-9}.

\end{thebibliography}
\end{document}